\newcommand{\set}[2]{\left\{\, \mathinner{#1}\vphantom{#2}\: \left|\: \vphantom{#1}\mathinner{#2} \right.\,\right\}}
\newcommand{\oneset}[1]{\left\{\, \mathinner{#1} \,\right\}}
\newcommand\ie{i.e\@., }
\newcommand{\Wlog}{W.\,l.\,o.\,g.\xspace}
\newcommand{\sse}{\subseteq}
\newcommand{\abs}[1]{\left|\mathinner{#1}\right|}
\newtheorem{observation}{Observation}[section]
\newcommand{\cD}{\mathcal{D}}
\newcommand{\cU}{\mathcal{U}}
\newcommand{\cX}{\mathcal{X}}
\newcommand{\universe}{U}
\newcommand{\pchild}[3]{{#1}\kern-1pt{+}{#2}{#3}}
\newcommand{\armin}[2][]{\todo[color=green!30, #1]{#2}}
\newcommand{\florian}[2][]{\todo[color=orange!45, #1]{#2}}
\begin{document}

  \title{\Large Lower Bounds for Sorting 16, 17, and 18 Elements}

  \author{Florian Stober\thanks{FMI, University of Stuttgart, Germany} \and Armin Wei\ss$^*$
  }

  \date{}

  \maketitle







  \begin{abstract}
    \small\baselineskip=9pt
    It is a long-standing open question to determine the minimum number of comparisons $S(n)$ that suffice to sort an array of $n$ elements.
    Indeed, before this work, $S(n)$ has been known only for $n\leq 22$ with the exception of $n=16$, $17$, and $18$.

    In this work, we fill that gap by proving that sorting $n=16$, $17$, and $18$ elements requires $46$, $50$, and $54$ comparisons respectively.
    This fully determines $S(n)$ for these values and disproves a conjecture by Knuth that $S(16) = 45$.
    Moreover, we show that for sorting $28$ elements at least 99 comparisons are needed.

    We obtain our result via an exhaustive computer search which extends previous work by Wells (1965) and Peczarski (2002, 2004, 2007, 2012).
    Our progress is both based on advances in hardware and on novel algorithmic ideas such as applying a bidirectional search to this problem.

  \end{abstract}

  \section{Introduction}\label{sec:introduction}

  Sorting a set of elements is an important operation frequently performed by many computer programs.
  Consequently there exist a variety of sorting algorithms, each of which comes with its own advantages and disadvantages.
  A common measurement for the performance of a (comparison-based) sorting algorithm is the number of comparisons performed.
  The aim of this work is to prove lower bounds on the minimum number of comparisons required to sort a set of elements of a given size.

  We denote by $S(n)$ the minimum number of comparisons which suffice to sort $n$ elements in the worst case.
  It is well-known that $S(n) \ge \left\lceil \log_2 n! \right\rceil$.
  We call $C(n) = \left\lceil \log_2 n! \right\rceil$ the \emph{information theoretic lower bound}.
  The classical Mergesort algorithm comes already quite close to this lower bound~-- with a difference of only roughly $0.5n$ comparisons~-- see e.\,g.~\cite{Knuth}.
  The Ford-Johnson algorithm~\cite{FordJohnson} comes even closer to the lower bound, indeed being optimal for $n \le 11$.
  We denote by $F(n)$ the number of comparisons required by the Ford-Johnson algorithm to sort a set of $n$ elements.
  Obviously we have $C(n) \le S(n) \le F(n)$ and for $n \le 11$ even $C(n) = S(n) = F(n)$.
  For a detailed description of the Ford-Johnson algorithm we refer to Knuth's book~\cite[Section 5.3.1]{Knuth}, where the algorithm is called \emph{merge insertion}.

  For twenty years after the publication of the Ford-Johnson algorithm in 1959, it remained unclear whether the Ford-Johnson algorithm was optimal.
  Manacher~\cite{manacher1979ford} has been the first to show that this is not the case, and there are infinitely many $n$ for which $F(n) > S(n)$.
  Shortly after, it has been shown that the Ford-Johnson algorithm is not optimal for $n=47$~\cite{SCHULTEMONTING198119}.
  It is not known whether there is any $n < 47$ for which $F(n) > S(n)$.

  \paragraph{Lower Bounds via Computer Search.}\label{par:related-work}

  The first work to tackle the lower bound problem using an exhaustive computer search dates back to 1964, when
  Wells showed that $S(12) > C(12) = 29$~-- implying that $S(12) = F(12) = 30$~\cite{wells1964applications,wells2014elements}.

  His work is the foundation of many subsequent papers including the present.
  The algorithm explores the full search space of possible comparisons to sort some array applying some clever pruning techniques based on the number of linear extensions of intermediate partially ordered sets (posets)~-- for details see \cref{sec:basic_algorithm}.

  In 1994, Kasai, Sawato,  and Iwata succeeded to show that sorting 13 elements requires 34 comparisons~\cite{kasai1994thirty}.
  This implies that the Ford-Johnson algorithm is optimal for $n=13$.
  The computation took 230 hours, and the authors had to work within the limits of the computers available at the time.
  They had only 100\,MB of memory, which was not enough to store the entire search space.

  In the first decade of the new millennium Peczarski published several papers on the topic.
  Unaware of~\cite{kasai1994thirty}, he computed in 2002 that $S(13) = 34$~\cite{peczarski2002sorting}.
  This was followed by a journal version two years later, which additionally proved that the Ford-Johnson algorithm is optimal for sorting 14 and 22 elements~\cite{peczarski2004new}.
  The main difference between Wells' algorithm and the algorithms used by~\cite{kasai1994thirty,peczarski2002sorting,peczarski2004new} is that the latter sometimes need to check both outcomes of a comparison, whereas Wells' algorithm would only explore one outcome.
  Indeed, for $n=12$ it is sufficient to check only one outcome, but for $n=13$ and $n=14$ it is not.

  In 2007 a group of researchers ran Peczarski's algorithm on the Nankai Stars supercomputer, computing the values of $S(15)$ and $S(19)$~\cite{chinese2007}.
  Peczarski obtained the value for $S(15)$ independently some years earlier, but it was only published in 2007.
  Additionally he showed that for $n < 47$ there is no split-and-merge algorithm that beats the Ford-Johnson algorithm~\cite{peczarski2007ford}.
  By ``split-and-merge'' we mean an algorithm that splits the set into two disjoint subsets, sorts them independently using the Ford-Johnson algorithm and merges them afterwards.
  That class of algorithms is particularly interesting since Manacher used them to show that the Ford-Johnson algorithm is not optimal~\cite{manacher1979ford}.

  In an attempt to compute $S(16)$, in 2011 Peczarski suggested to use the algorithm for counting linear extensions presented in~\cite{LoofMB06}, which is significantly faster than the algorithm he had been using previously~\cite{peczarski2012towards}.
  Nevertheless, he did not succeed to compute $S(16)$.

  \subsection*{Contribution.}

  In this work we present a new algorithm to tackle the lower bound problem, following the footsteps of Wells and Peczarski.
  Using our algorithm we obtain the following:

  \begin{theorem}
    \label{thm:main}
    $S(16) = 46$, $S(17) = 50$, and $S(18) = 54$.
    Moreover, $S(28) \geq 99$.
  \end{theorem}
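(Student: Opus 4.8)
The plan is to prove each of the four claims by an exhaustive computer search in the style of Wells and Peczarski, organized as a two‑player game on posets as sketched in the introduction. For a fixed target number of comparisons $C$ and a fixed $n$, Player~1 repeatedly picks a pair of incomparable elements of the current poset, Player~2 picks one of the two possible outcomes, and after $C$ rounds Player~1 wins iff the poset is a total order. Showing $S(n) > C$ amounts to exhibiting a winning strategy for Player~2, \ie a subtree of the game tree in which every Player~1 move is answered by (at least) one Player~2 move leading to a node from which the remaining poset cannot be sorted in the remaining budget. The classical pruning tool is that a poset with $e$ linear extensions cannot be sorted in fewer than $\lceil \log_2 e\rceil$ comparisons, so any node whose poset has more than $2^{r}$ linear extensions (where $r$ is the number of remaining comparisons) is an immediate Player~2 win; counting linear extensions is done via the algorithm of De Loof et al.\ as suggested in~\cite{peczarski2012towards,LoofMB06}. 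Poset isomorphism is used aggressively to collapse the search space, and a transposition table / hash of already‑resolved posets avoids recomputation.

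The concrete steps I would carry out are the following. First, set up the data structures: canonical forms for posets on up to $n$ elements, an exact linear‑extension counter, and a memoized search routine \texttt{Unsortable}$(P, r)$ returning true iff poset $P$ on $n$ elements provably needs more than $r$ comparisons. Second, implement the recursion: \texttt{Unsortable}$(P,r)$ is true if $P$ is not a total order and either $\lceil\log_2 (\#\text{lin.ext.}(P))\rceil > r$, or for \emph{every} incomparable pair $\{u,v\}$ of $P$ there is an outcome $P'$ (one of $P\cup\{u\prec v\}$, $P\cup\{v\prec u\}$, transitively closed) with \texttt{Unsortable}$(P',r-1)$ true. Third, run \texttt{Unsortable}$(P_\emptyset, C(n))$ for the antichain on $n$ elements with $C(n)=\lceil\log_2 n!\rceil$; for $n=16,17,18$ this gives $S(16)>45$, $S(17)>49$, $S(18)>53$, and combining with the Ford–Johnson upper bounds $F(16)=46$, $F(17)=50$, $F(18)=54$ (from~\cite{Knuth,FordJohnson}) yields the exact values. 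Fourth, for $n=28$ the full search at budget $98$ is infeasible head‑on, so apply the novel bidirectional idea advertised in the abstract: search forward from the antichain for a bounded number of levels, search backward from the total order (equivalently, forward in a ``merging'' phase) to build a database of posets sortable within a given budget, and meet in the middle, certifying that no forward poset at the frontier lies in the backward‑sortable set; this certifies $S(28)\ge 99$. Finally, the lower bounds being machine‑checked, I would describe the correctness argument for the search procedure and report the resources used, as the paper does in later sections.

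The main obstacle is purely computational scale rather than conceptual: the game tree for $n=16$ already has astronomically many nodes, so the search is only feasible with a combination of (i) good move ordering for Player~1 (trying the ``most informative'' comparisons first so that Player~2's forced reply is found quickly), (ii) a fast and memory‑lean linear‑extension counter, (iii) isomorphism reduction and a large transposition table, and (iv) for $n=28$, the bidirectional decomposition that keeps each half within reach. Getting all of these to cooperate within available memory — the historical bottleneck already for~\cite{kasai1994thirty} — is the hard part, and is precisely where the paper's contribution over Wells and Peczarski lies. A secondary subtlety is ensuring the bidirectional search is \emph{sound}: the backward phase must over‑approximate (never under‑approximate) the set of sortable posets, and the forward frontier must be a genuine cut of the game tree, so that the meet‑in‑the‑middle check really certifies unsortability; I would prove this soundness lemma carefully before trusting the $S(28)\ge 99$ output.
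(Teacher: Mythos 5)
Your logical skeleton is sound and matches the paper: the recursion for \texttt{Unsortable} is exactly the predicate $S_c(P)$ used here, the pruning by $e(P)>2^r$ is \cref{easyObservation2}, congruence reduction and memoization are as in Peczarski's forward search, and combining the machine-verified lower bounds with the Ford--Johnson upper bounds gives the stated equalities. The gap is in the computational strategy, and it is not a matter of engineering detail but of where the one genuinely new idea has to go. You propose to settle $n=16,17,18$ by a (well-tuned) forward-only search and to reserve the bidirectional idea for $n=28$. This is inverted relative to what actually works: a forward-only search for $n=16$ is precisely what Peczarski attempted and could not finish even with the fast linear-extension counter of~\cite{LoofMB06}, and $17,18$ are far worse; by contrast, $n=28$ at budget $C(28)=98$ has total efficiency $\effTot\approx 0.96$, so the information-theoretic pruning alone is so strong that the paper runs it with \emph{no} backward search at all (efficiency bandwidth $0$ in \cref{tab:prameters_times}, about 10 hours). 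So your plan as written would fail exactly on the three headline cases and spend its novel machinery where it is not needed.

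Moreover, the bidirectional scheme you sketch is the classical meet-in-the-middle: run the forward search to a frontier, build a backward database of posets sortable within the remaining budget, and check that no frontier poset lies in it, requiring the backward set to over-approximate the sortable posets at that cut. The obstacle is that an unrestricted backward search is far \emph{larger} than the forward search (already for $n=13$: about $682\cdot 10^6$ backward posets versus $1.8\cdot 10^6$ forward), so a backward database complete enough to be trusted at a frontier is itself infeasible for $n\ge 16$. The paper's solution, which is missing from your proposal, is the \emph{partial} backward search: full layers only near the total order, and on all earlier layers only posets with efficiency at most a threshold $\effThr$ (with an ``unknown'' flag when a reverse-edge poset falls outside the threshold), together with the partial completeness statement (\cref{lem:partialBackwardSearch}) that every sortable poset with $\eff(P)\le\effThr$ reachable by the forward search does appear in the backward results. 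The forward search then runs to full depth but consults the backward database only for low-efficiency posets, which is sound precisely because of that lemma; there is no frontier cut at all. Without this efficiency-threshold restriction and its accompanying completeness argument, your soundness requirement (``the backward phase must over-approximate the sortable posets'') cannot be met at tractable cost, and the $n=16,17,18$ computations do not go through.
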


  Notice that in order to prove the theorem, we only have to show the lower bounds, \ie that 16 (resp.
  17, 18) elements are not sortable using 45 (resp.
  49, 53) comparisons.
  The upper bound is provided by the Ford-Johnson algorithm.
  Our result for $S(16)$ disproves a conjecture by Knuth~\cite[Section 5.3.1]{Knuth} that $S(16) < F(16)$.
  Moreover, we also confirm the previous results~\cite{wells1965applications,kasai1994thirty,peczarski2002sorting,peczarski2004new,peczarski2007ford,chinese2007}.

  Our result is made possible by advancements both in hardware and in software.
  In terms of hardware we rely on two Intel Xeon CPUs each having 12 cores (24 threads) and a total of 768\,GB of RAM (which is a great improvement compared to what Peczarski used~\cite{peczarski2012towards}, but less than the supercomputer in~\cite{chinese2007}).
  In terms of software we have several improvements compared to previous work.
  Here we want to highlight two points:
  \begin{itemize}
    \item We represent the partial orders occurring as intermediate results in a much more succinct data structure.
    By this easy improvement, our implementation needs only $ n(n-1)/2 + 5 $ \emph{bits} for each poset, whereas~\cite{peczarski2004new} uses $8(n + 2)$ \emph{bytes} per poset.
 For $n=16$ this means we reduce the space from 144 bytes in \cite{peczarski2004new} and still 68 bytes in \cite{peczarski2012towards} down  to only 16 bytes. 
    Moreover, with our implementation, storing the posets when running the search for $n=13$ would require about $30$\,MB storage (plus some index structures), while in~\cite{kasai1994thirty} even 100\,MB were not enough.
    \item Our most important algorithmic improvement is the use of a bidirectional search (see \cref{sec:bidirectional}).
    This allows us to reduce the size of the search space drastically.
    Indeed, for $n=13$, a pure forward search as in previous work generates 2032125 posets whereas our bidirectional search with properly chosen parameters  explores only 254574 posets~-- thus, the search space is 87\% smaller.
    This reduction certainly is even more dramatic for $n=16,17,18$ but hard to quantify precisely as we cannot store the full search space in memory.
    Reducing the size of the search space provides a speed-up in two ways: first, directly since we have to explore less posets, and, second, because of the reduced memory consumption less posets have to be discarded and explored several times.

    We want to point out that our approach for the bidirectional search differs from the common technique where searches from the start and endpoint are performed until the search spaces meet (see for example \cite{HolteFSSC17}). In our algorithm, we first perform a heavily pruned (and not complete) backward search~-- however, with no bound on the number of steps taken; after that we use the results as an advice for the forward search to allow for improved pruning. In our case, this novel approach results in the enormous reduction in search space size as detailed above.

  \end{itemize}

  \paragraph*{Outline.} In \cref{sec:basic_algorithm} we introduce the basic algorithm as described by Peczarski.
  After that we introduce our bidirectional approach in \cref{sec:bidirectional} followed by further details on our improvements and implementation in \cref{sec:details}.
  Finally, \cref{sec:experiments} presents some details on running times etc.\ of our experiments.

  \section{Preliminaries}\label{sec:preliminaries}

  \paragraph{Posets.}
  A partially ordered set, short \emph{poset}, is a pair $(\universe, R)$ where $\universe$ is a set and $R \sse \universe \times \universe$ is a partial order, \ie a reflexive, transitive and antisymmetric relation.
  Throughout this paper $\universe$ is finite.
  Thus, we can write $\universe = \{1, \dots, n\}$ where $n = |\universe|$.
  Two posets $P_A = (\universe, R_A)$ and $P_B = (\universe, R_B)$ are isomorphic if there is a bijective mapping $\varphi: \universe \to \universe$ such that $(u, v) \in R_A \iff (\varphi(u), \varphi(v)) \in R_B$ for all $u, v \in \universe$.
  By $P_0 = (\universe, R_0)$ we denote the unordered poset where $R_0 = \set{(u,u)}{u\in \universe}$.
  Usually, we write $\preceq$ instead of $R$ and $u\preceq v$ for $(u,v) \in R$.
  A poset is called a \emph{total order} if for all $u,v \in \universe$ either $u\preceq v$ or $v \preceq u$.

  The \emph{dual} of a poset is obtained by reversing the direction of all edges (\ie replacing $R$ by $\set{(v,u)}{(u,v) \in R}$).
  Note that a poset is sortable using $c$ comparisons if and only if its dual is sortable using $c$ comparisons.
  Therefore, according to~\cite{peczarski2004new}, we call two posets $P$ and $Q$ \emph{congruent} if $P$ and $Q$ are isomorphic or $P$ and the dual of $Q$ are isomorphic.

  A directed acyclic graph (\emph{DAG}) is a directed graph $(\universe,S)$ with an acyclic edge set $S\sse \universe \times \universe$.
  Its reflexive and transitive closure is a poset.
  More specifically, given some acyclic edge set $S \sse \universe \times \universe$, the poset generated by $S$ is defined as $P=(\universe, R)$ where $R$ is the smallest possible reflexive, transitive and antisymmetric relation with $S \sse R$.
  Given a poset $P = (\universe,\preceq)$, its Hasse diagram $(\universe, H)$ is given by the smallest subset $H \sse \universe \times \universe$ such that $P$ is the poset generated by $H$.
	For simplicity, we also call $H$ itself a \emph{Hasse diagram}.

  Given a poset $P = (\universe,R)$ and $u,v \in \universe$, we write $\pchild{P}{u}{v}$ for the poset generated by $R \cup \{(u,v)\}$, \ie the poset obtained from $P$ after adding the comparison $u \preceq v$.
  \begin{observation}
    \label{easyObservation}\
    \begin{itemize}
      \item If $u$ and $v$ are not comparable in $P$, then $(u,v)$ is always an edge in the Hasse diagram of $\pchild{P}{u}{v}$.
      \item The number of edges in the Hasse diagram of $\pchild{P}{u}{v}$ is at most the number of edges in the Hasse diagram of $P$ plus 1.
    \end{itemize}
  \end{observation}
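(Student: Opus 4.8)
The plan is to reduce both claims to an explicit description of which pairs become comparable when the comparison $u\preceq v$ is added. The only case that needs an argument is when $u$ and $v$ are incomparable in $P$ (otherwise $\pchild{P}{u}{v}$ equals $P$, or is not defined because adding $u\preceq v$ would create a $2$-cycle). So assume $u,v$ incomparable. First I would observe that the relation of $\pchild{P}{u}{v}$ is $R$ together with exactly the pairs $(a,b)$ satisfying $a\preceq u$ and $v\preceq b$ in $P$: any directed path in $R\cup\{(u,v)\}$ either avoids the new edge (and then already witnesses a relation of $R$) or uses it exactly once (as $P$ contains no $v$-to-$u$ path), factoring as $a\preceq u$, then $u\preceq v$, then $v\preceq b$. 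In particular $u\prec v$ in $\pchild{P}{u}{v}$.

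For the first bullet I would argue by contradiction. If $(u,v)$ is not a covering relation of $\pchild{P}{u}{v}$, there is $w$ with $u\prec w\prec v$ in $\pchild{P}{u}{v}$. Unfolding the description of the added relation, $u\prec w$ forces $u\prec w$ or $v\preceq w$ in $P$, and $w\prec v$ forces $w\prec v$ or $w\preceq u$ in $P$; checking the four combinations, each one either contradicts the incomparability of $u$ and $v$ (the cases $u\prec w\prec v$ and $v\preceq w\preceq u$ in $P$) or contradicts antisymmetry (the two mixed cases yield $u\prec u$ resp.\ $v\prec v$). Hence $(u,v)$ is a covering relation, i.e.\ an edge of the Hasse diagram.

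For the second bullet, write $H$ and $H'$ for the Hasse diagrams of $P$ and $\pchild{P}{u}{v}$. It suffices to show $H'\sse H\cup\{(u,v)\}$; since $(u,v)\notin H$ this gives $\abs{H'}\le\abs{H}+1$. Let $(a,b)\in H'$ with $(a,b)\neq(u,v)$. If $a\prec b$ already holds in $P$, then any element strictly between $a$ and $b$ in $P$ would still lie strictly between them in $\pchild{P}{u}{v}$, so $(a,b)\in H$. Otherwise $(a,b)$ is a newly added relation, so $a\preceq u$ and $v\preceq b$ in $P$; I would split on whether $a=u$. If $a=u$, then $b\neq v$ forces $v\prec b$ in $P$, giving the chain $u\prec v\prec b$ in $\pchild{P}{u}{v}$; if $a\neq u$, then $a\prec u$ in $P$ and $u\prec b$ in $\pchild{P}{u}{v}$ (using $v\preceq b$ and $u\neq b$, the latter because $u,v$ are incomparable) give $a\prec u\prec b$ in $\pchild{P}{u}{v}$. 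In both subcases $(a,b)$ is not a covering relation of $\pchild{P}{u}{v}$, a contradiction, so this subcase cannot occur.

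The only routine work is the two small case analyses. I do not expect a genuine obstacle, as this is a definitional lemma; the one place requiring care is the bookkeeping of the degenerate subcases ($a=u$, $b=v$, $u=b$) in the second bullet, where one repeatedly invokes the incomparability of $u$ and $v$ in $P$ to exclude equalities. If forced to name the ``hard'' step, it is cleanly establishing the description of the relation of $\pchild{P}{u}{v}$, since both bullets then follow by direct unfolding.
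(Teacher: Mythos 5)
Your proof is correct: the explicit description of the order of $\pchild{P}{u}{v}$ as $R$ together with all pairs $(a,b)$ with $a\preceq u$ and $v\preceq b$ is valid (the new edge is used at most once since $v\not\preceq u$), and both case analyses, including the degenerate subcases $a=u$, $b=v$, $u=b$, go through, yielding $H'\sse H\cup\{(u,v)\}$ and hence both bullets. The paper states this as an observation without proof, so there is no argument to compare against; yours is the natural, complete verification of what the authors treat as immediate.
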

  \newcommand{\eff}{E}
  \newcommand{\effTot}{E_{\mathrm{tot}}}
   \newcommand{\effThr}{E_{\mathrm{thr}}}

  \paragraph{Number of Linear Extensions and Efficiency.}
  Let $P=(\universe,R)$ be a poset.
  The number of linear extensions $e(P)$ is the number of total orders of $\universe$ that are compatible with $R$.
  Given a poset obtained from the unordered poset $P_0$ within $c$ comparisons, we define its \emph{efficiency} as
  \[\eff(P) = \frac{\abs{\universe}!}{2^c \cdot e(P)}.\]
  For details on the concept of efficiency we refer the reader to~\cite[Section 5.3.1]{Knuth}.
  Be aware that the efficiency of a poset implicitly depends on the number of comparisons $c$ by which the poset has been obtained.
  
  Assume we wish to sort $n = \abs{\universe}$ elements in $C$ comparisons (or prove that this is not possible). Then we denote with $\effTot = n! \cdot 2^{-C}$ the efficiency of a total order assuming that it has been obtained in $C$ comparisons from the unordered poset.
  
  Because of the following observation (which follows by the same argument as the information theoretic lower bound for sorting), the concept of efficiency is closely related to the question of whether a given poset might be sortable within $r$ comparisons (the second statement uses the implicit assumption $C = c+r$). 

  \begin{observation}
    \label{easyObservation2}
    If $e(P) > 2^r$, then $P$ is not sortable within $r$ comparisons.
    In other words, if $\eff(P) < \effTot$, then $P$ is not sortable within $r$ comparisons.
  \end{observation}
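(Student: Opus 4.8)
The plan is to run the standard adversary / information-theoretic argument, but phrased directly at the level of posets rather than at the leaves of a decision tree. The first ingredient I would record is an elementary counting identity: if $u,v$ are incomparable in $P$, then every linear extension of $P$ places either $u$ before $v$ or $v$ before $u$, and these two families are precisely the linear extensions of $\pchild{P}{u}{v}$ and $\pchild{P}{v}{u}$. Hence $e(\pchild{P}{u}{v}) + e(\pchild{P}{v}{u}) = e(P)$, so $\max\{e(\pchild{P}{u}{v}),\, e(\pchild{P}{v}{u})\} \ge e(P)/2$, and since $e(\cdot)$ is integer-valued, one of the two children has at least $\ceil{e(P)/2}$ linear extensions.

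Next I would set up an induction on $r$ proving: whenever $e(P) > 2^r$, an adversary can answer $r$ comparisons so that the resulting poset is not a total order. The base case $r=0$ is immediate, since $e(P) > 1$ forces $P$ to be not a total order (a total order has exactly one linear extension). For the inductive step, consider the first comparison $(u,v)$ queried by a hypothetical sorting procedure; if $u,v$ are already comparable in $P$ the comparison yields no information, so we may assume they are incomparable. By the counting identity the adversary picks the outcome whose child $P'$ satisfies $e(P') \ge \ceil{e(P)/2} > 2^{r-1}$ (using $e(P) > 2^r$), and then applies the induction hypothesis to $P'$ with $r-1$ comparisons left. This shows $P$ is not sortable within $r$ comparisons, which is the first statement.

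Finally, for the reformulation in terms of efficiency I would simply unfold the definitions: with $P$ obtained in $c$ comparisons and the implicit convention $C = c+r$, the inequality $\eff(P) < \effTot$ reads $\tfrac{\abs{\universe}!}{2^{c}\, e(P)} < \tfrac{\abs{\universe}!}{2^{c+r}}$, which after cancelling $\abs{\universe}!$ and clearing denominators is exactly $e(P) > 2^r$. So the second statement is literally a restatement of the first, and no extra work is needed.

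I do not expect a genuine obstacle here. The only points deserving a word of care are (i) noting explicitly that comparing an already-comparable pair cannot help, so we may legitimately assume incomparability and invoke the split identity, and (ii) the rounding step, where integrality of $e(\cdot)$ upgrades $e(P)/2$ to $\ceil{e(P)/2}$ and thereby keeps the strict inequality $> 2^{r-1}$ alive through the induction.
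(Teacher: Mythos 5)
Your proof is correct and is precisely the argument the paper alludes to (the observation is stated without an explicit proof, only with the remark that it follows by the same reasoning as the information-theoretic lower bound): the split $e(\pchild{P}{u}{v})+e(\pchild{P}{v}{u})=e(P)$ for an incomparable pair, an adversary induction on $r$, and the observation that the efficiency formulation is just an algebraic rewriting of $e(P)>2^r$ under the convention $C=c+r$. A minor remark: the integrality/ceiling step is unnecessary, since the strict inequality $e(P)>2^r$ already gives $\max\{e(\pchild{P}{u}{v}),\,e(\pchild{P}{v}{u})\}\ge e(P)/2>2^{r-1}$ directly.
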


  The following crucial observation establishes that the efficiency is monotone in a certain sense:

  \begin{observation}
    \label{easyObservation3}
    Let $u,v \in \universe$.
    Then there is some $P' \in \{\pchild{P}{u}{v}, \pchild{P}{v}{u}\}$ such that $\eff(P') \leq \eff(P)$.
  \end{observation}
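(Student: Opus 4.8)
The plan is to reduce the claim to the additivity of the number of linear extensions under splitting on an incomparable pair, followed by a pigeonhole step. Throughout, I would keep in mind that $\eff$ carries an implicit parameter: if $P$ was obtained from $P_0$ in $c$ comparisons, then both $\pchild{P}{u}{v}$ and $\pchild{P}{v}{u}$ count as having been obtained in $c+1$ comparisons.

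First I would dispose of the degenerate case. If $u$ and $v$ are comparable in $P$ (in particular if $u=v$), say $u\preceq v$, then $\pchild{P}{u}{v}=P$ as a relation but is now regarded as obtained in $c+1$ comparisons, so $\eff(\pchild{P}{u}{v})=\frac{\abs{\universe}!}{2^{c+1}\,e(P)}=\eff(P)/2\le\eff(P)$, and we take $P'=\pchild{P}{u}{v}$. Hence we may assume $u\ne v$ and that $u,v$ are incomparable in $P$.

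Next I would establish the identity $e(P)=e(\pchild{P}{u}{v})+e(\pchild{P}{v}{u})$. The linear extensions of $P$ are the total orders $\le^{*}$ on $\universe$ containing $R$; since $u\ne v$, each such $\le^{*}$ satisfies exactly one of $u\le^{*}v$ or $v\le^{*}u$. A total order contains $R\cup\{(u,v)\}$ if and only if it contains $R$ and satisfies $u\le^{*}v$ (passing to the reflexive--transitive--antisymmetric closure does not change which total orders contain the relation), so the linear extensions with $u\le^{*}v$ are exactly those of $\pchild{P}{u}{v}$, and symmetrically for $v\le^{*}u$. This yields a partition and hence the identity.

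Finally, by pigeonhole one of the two summands is at least $e(P)/2$; say $e(\pchild{P}{u}{v})\ge e(P)/2$. Then, since $\pchild{P}{u}{v}$ is obtained in $c+1$ comparisons,
\[
\eff(\pchild{P}{u}{v})=\frac{\abs{\universe}!}{2^{c+1}\,e(\pchild{P}{u}{v})}\le\frac{\abs{\universe}!}{2^{c+1}\cdot e(P)/2}=\frac{\abs{\universe}!}{2^{c}\,e(P)}=\eff(P),
\]
so $P'=\pchild{P}{u}{v}$ has the desired property. I do not expect a genuine obstacle here; the only thing to be careful about is the bookkeeping of the implicit comparison count $c$ in the definition of $\eff$ and the handling of the non-incomparable case. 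The real mathematical content is simply the elementary fact that the linear-extension count splits additively when a previously incomparable pair is oriented both ways.
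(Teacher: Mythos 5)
Your proposal is correct and follows essentially the same route as the paper: the paper's proof also picks the child $P'$ with the larger number of linear extensions, notes $e(P')\geq e(P)/2$, and concludes $\eff(P')\leq\eff(P)$ by the same one-line computation. You merely make explicit the additivity $e(P)=e(\pchild{P}{u}{v})+e(\pchild{P}{v}{u})$ and the comparable-pair edge case, which the paper leaves implicit.
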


\begin{proof}
	Let $\{P',P''\} =\{\pchild{P}{u}{v}, \pchild{P}{v}{u}\}$ with $e(P') \geq e(P'')$. 	
	Then, we have $e(P') \geq e(P)/2$ and, hence,
$
		\eff(P') = \frac{\abs{\universe}!}{2^{c+1} \cdot e(P')}
				\leq \frac{\abs{\universe}!}{2^{c} \cdot e(P)} = \eff(P).
$
\end{proof}

  \paragraph{Sorting Algorithms.}
  We can view a sorting algorithm as a binary decision tree.
  Each node is labeled with a poset.
  The root node is the unordered poset $P_0$.
  The leaf nodes are posets with a total order.
  The sorting algorithm associates to each inner node $P$ a comparison $(u,v)$, meaning that from $P$ the next step is to compare $u$ with $v$.  
  The two children $\pchild{P}{u}{v}$ and $\pchild{P}{v}{u}$ correspond to the two possible outcomes of the comparison $u \preceq v$ and $u \succeq v$.
  The number of comparisons required by the algorithm (in the worst case) is the maximum length of a path from the root to some leaf.
  
    In terms of the search space described in \cref{fig:fw_search_tree}, a sorting algorithm can be viewed as choosing one outgoing edge (comparison) of every poset $P$.

  \section{The Basic Algorithm: Forward Search According to Peczarski}\label{sec:basic_algorithm}

  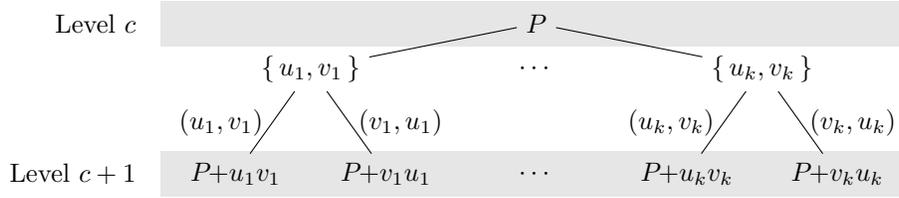
\begin{figure*}
    \centering
    \begin{tikzpicture}[xscale=2,yscale=-1]

      \fill [gray!20] (2.5,-0.3) rectangle (7.5,0.3);
      \fill [gray!20] (2.5,1.7) rectangle (7.5,2.3);

      \node (P) at (5,0) {$P$};
      \node (V1) at (3.5,0.6) {$\oneset{u_1, v_1}$};
      \node (Vk) at (6.5,0.6) {$\oneset{u_k, v_k}$};
      \node (1) at (3,2) {$\pchild{P}{u_1}{v_1}$};
      \node (2) at (4,2) {$\pchild{P}{v_1}{u_1}$};
      \node at (5,0.6) {$\cdots$};
      \node at (5,2) {$\cdots$};
      \node (3) at (6,2) {$\pchild{P}{u_k}{v_k}$};
      \node (4) at (7,2) {$\pchild{P}{v_k}{u_k}$};

      \draw (P) edge (V1)
      (P) edge (Vk)
      (V1) edge node[left] {$(u_1, v_1)$} (1)
      (V1) edge node[right] {$(v_1, u_1)$} (2)
      (Vk) edge node[left] {$(u_k, v_k)$} (3)
      (Vk) edge node[right] {$(v_k, u_k)$} (4);

      \node[anchor=east,align=right] at (2.4, 0) {Level $c$};
      \node[anchor=east,align=right] at (2.4, 2) {Level $c + 1$};
    \end{tikzpicture}

    \caption{Search space for the forward search.
    Level $c$ contains all posets obtainable from the unordered set with $c$ comparisons.
    For each pair of unrelated nodes $u$ and $v$, the poset $P$ has two children $\pchild{P}{u}{v}$ and $\pchild{P}{v}{u}$.}
    \label{fig:fw_search_tree}
  \end{figure*}

  In this section we describe the classical algorithm used by Peczarski~\cite{peczarski2002sorting,peczarski2004new,peczarski2007ford,peczarski2012towards} and others~\cite{kasai1994thirty,chinese2007}, which extends Wells' algorithm~\cite{wells1965applications}.
  The search starts with the unordered poset $P_0$ as root node.
  To show that $P_0$ is not sortable using at most $C$ comparisons, we show that there is no comparison such that both of the posets that result from it can be sorted using at most $C-1$ comparisons.
  In general, if we have a poset $P$ that has been obtained using $c$ comparisons from $P_0$, it is sortable in the remaining $C-c$ comparisons if and only if there is a comparison such that both of the successor posets can be sorted in $C-c-1$ comparisons.
  The predicate $S_c(P)$, denoting whether $P$ can be sorted using the remaining $C-c$ comparisons, can be expressed as follows:
  \begin{equation*}
    S_c(P) = \bigvee_{(u, v) \in \universe \times \universe} \bigl( \; S_{c+1}( \pchild{P}{u}{v} ) \, \land \,  S_{c+1}( \pchild{P}{v}{u}) \; \bigr)\label{eq:pred_sortable}
  \end{equation*}
  The search space for the forward search is depicted in \cref{fig:fw_search_tree}.
  The children of a poset always come in pairs, as each comparison has two possible outcomes, resulting in two child posets.

  For each comparison only one child is explored at first.
  Only if it turns out to be sortable, we check the other child.
  The intuition here is that most posets will turn out to be not sortable~-- so that way we save a lot of work.
  Indeed, in Wells' original algorithm, only one outcome of each comparison is checked~-- this actually suffices to show that $n = 12$ cannot be sorted in 29 comparisons.
  For 13 up to 19 elements Wells' original algorithm does not yield a lower bound.

  In order to keep the search space to a reasonable size the following points are crucial:
  \begin{itemize}
    \item We do not distinguish congruent (isomorphic or isomorphic to the dual) posets, as they can be sorted using the same number of comparisons.
    We give details on how to find congruent posets in \cref{subsec:poset-representation} and  \cref{subsec:hashing}. We store posets determined to be (un-)sortable for later use (\ie this is a kind of memoization approach).
    \item For each comparison the child with the larger number of linear extensions is checked first, as it is assumed to be harder to sort.
    \item Posets with $e > 2^r$ linear extensions cannot be sorted using $r$ comparisons (\cref{easyObservation2}).
  \end{itemize}

  \paragraph{Order of Exploration.}
  The order of exploration is not an essential feature of the algorithm: one could imagine a depth-first or breadth-first approach or some combination of them.
  Following Peczarski~\cite{peczarski2002sorting,peczarski2004new,peczarski2007ford,peczarski2012towards} our implementation traverses the search space in a breadth-first search manner, checking for each level, whether the encountered posets are sortable.
  However, notice that this is not a pure breadth-first search as only the possible candidates for comparisons (\ie pairs of elements to be compared) are explored in a breadth-first manner, while, as described above, the outcomes of the comparisons are explored in a depth-first manner  (\ie for a given poset $P$, the algorithm initially only explores $\pchild{P}{u}{v}$ or $\pchild{P}{u}{v}$ but not both) to reduce the size of the search space.
  Moreover, to reduce the memory usage, we even deviate slightly further from this breadth-first approach (like in~\cite{peczarski2004new,peczarski2007ford,peczarski2012towards})~-- for details see \cref{subsec:memory} below.

  \section{Improved Algorithm: Bidirectional Search}\label{sec:bidirectional}

  In this section we describe our main algorithmic improvement, namely the bidirectional search.
  We start by explaining the backward search which is the first step of the bidirectional search.%

  \subsection{Backward Search.}

  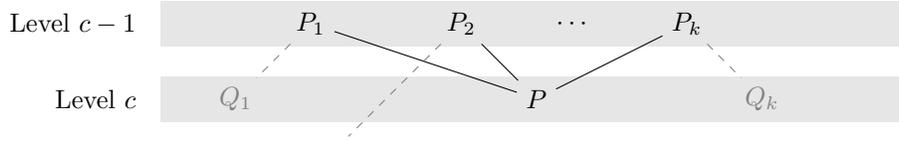
\begin{figure*}
    \centering
    \begin{tikzpicture}[xscale=2]

      \fill [gray!20] (2.5,-0.3) rectangle (7.5,0.3);
      \fill [gray!20] (2.5,0.7) rectangle (7.5,1.3);

      \node (P) at (5,0) {$P$};
      \node (1) at (3.5,1) {$P_1$};
      \node (2) at (4.5,1) {$P_2$};
      \node at (5.25,1) {$\cdots$};
      \node (3) at (6,1) {$P_{k}$};
      \node[color=gray] (Q1) at (3,0) {$Q_1$};
      \node[color=gray] (Qk) at (6.5,0) {$Q_k$};

      \draw (P) edge (1)
      (P) edge (2)
      (P) edge (3);
      \draw[dashed,color=gray]
      (1) edge (Q1)
      (2) edge (3.75,-0.5)
      (3) edge (Qk);

      \node[anchor=east,align=right] at (2.4, 0) {Level $c$};
      \node[anchor=east,align=right] at (2.4, 1) {Level $c - 1$};
    \end{tikzpicture}

    \caption{Search space for the backward search.
    Level $c$ contains all posets that can be sorted with $C - c$ comparisons.
    The poset $P$ can be obtained from its predecessors $\{P_1, \dots P_k\}$ by adding a suitable edge.
    For $P_i$ to be a predecessor, the poset $Q_i$ obtained by adding the reverse edge needs to be on a level $c' \ge c$.}
    \label{fig:bw_search_tree}
  \end{figure*}

  The idea behind the backward search is to start with a totally ordered poset and then enumerate all predecessors.
  Notice that there is only one congruence class of totally ordered posets.
  To keep notation consistent with the forward search, we will say that level $c$ contains all posets sortable in $C - c$ comparisons.
  We start with a totally ordered poset on level $C$.

  The predecessors of a poset $P$ on level $c$ are computed as follows.
  We begin by computing the set of potential predecessors of a poset.
  Those are posets from which we can obtain $P$ using one comparison (this does not mean that they are sortable using $C - c + 1$ comparisons yet).
  Each edge in the Hasse-Diagram of $P$ is a potential comparison by which $P$ can be obtained from a predecessor.
  However, be aware that even for one specific comparison $(u,v) \in \universe \times \universe$, there might be multiple potential predecessors $P'$ such that $P = \pchild{P'}{u}{v}$.
  Indeed, each combination of transitive edges that would be implied by the comparison is a valid option.
  In \cref{subsec:predecessorenumeration} we give details on how to find this set of potential predecessors.

  To check whether a potential predecessor $P'$ with $P = \pchild{P'}{u}{v}$ is an actual predecessor, we need to check whether $\pchild{P'}{v}{u}$ (\ie the poset obtained by the other outcome of the comparison $(u,v)$, hereafter called reverse edge poset) is sortable using at most $C - c$ comparisons.
  To be able to perform this test, it is important that a breadth first search is used as we need to look for $\pchild{P'}{v}{u}$ on level $c$ and all other levels up to $C$.
  \Cref{fig:bw_search_tree} depicts the search space for the backward search.
  We can check whether $n$ elements are sortable using $C$ comparisons by checking whether $P_0$ is contained in level 0.

  As an optimization we do not store posets on level $c$ with more than $c$ edges in their Hasse-Diagram.
  Let $B_c$ denote the set of (congruence classes of) posets obtained by the backward search on level $c$ when using this optimization.

  \begin{lemma}
    \label{lem:backwardSearchComplete}
    Let $P$ be a poset obtained by the forward search in $c$ comparisons and which is sortable in at most $C-c$ comparisons. Then
    $P \in \bigcup_{d \geq c} B_d$.
  \end{lemma}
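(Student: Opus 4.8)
The plan is to prove the statement by induction on $C - c$, the number of remaining comparisons, going from the totally ordered poset downwards. The base case $c = C$ is immediate: the only poset obtained by the forward search in $C$ comparisons that is sortable in $0$ comparisons is a total order, which has $\binom{n}{2}$ edges in its Hasse diagram\,---\,wait, that is not $\le C$ in general, so one must be slightly careful. Actually a total order's Hasse diagram is a path with $n-1$ edges, and since $C \ge C(n) = \lceil \log_2 n!\rceil \ge n-1$ for the relevant $n$, the total order survives the edge-count optimization and lies in $B_C$. More generally, the key structural fact I will need is that any poset $P$ reachable from $P_0$ by $c$ comparisons has at most $c$ edges in its Hasse diagram; this follows by induction from the second bullet of \cref{easyObservation} (each comparison adds at most one Hasse edge), so $P$ is never discarded by the ``more than $c$ edges'' optimization when it appears on level $c$ or later.

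For the inductive step, suppose the claim holds for all posets sortable in at most $C - c - 1$ comparisons, and let $P$ be obtained by the forward search in $c$ comparisons and sortable in at most $C-c$ comparisons. If $P$ is already a total order we are in the base case, so assume not. Since $P$ is sortable in $\le C-c$ comparisons, there is a comparison $(u,v)$ with both $\pchild{P}{u}{v}$ and $\pchild{P}{v}{u}$ sortable in $\le C-c-1$ comparisons; moreover both of these are obtained by the forward search in $c+1$ comparisons. Pick the one, say $P^+ = \pchild{P}{u}{v}$, and note that by the first bullet of \cref{easyObservation}, $(u,v)$ is an edge in the Hasse diagram of $P^+$. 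By the induction hypothesis, $P^+ \in \bigcup_{d \ge c+1} B_d$, say $P^+ \in B_{d}$ for some $d \ge c+1$. Now I want to run the backward step from $P^+$ at level $d$: since $(u,v)$ is a Hasse edge of $P^+$, the poset $P$ is a potential predecessor of $P^+$ via this edge, and the reverse-edge poset $\pchild{P}{v}{u}$ is sortable in $\le C-c-1 \le C - (c' )$ comparisons for the relevant level bound, so it lies on some level $\ge c+1 > c$, which is exactly the condition (depicted in \cref{fig:bw_search_tree}) for $P$ to be accepted as an actual predecessor at level $c$. Finally $P$ has at most $c$ Hasse edges, so it is not discarded by the optimization, giving $P \in B_c \subseteq \bigcup_{d \ge c} B_d$.

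The main obstacle I anticipate is making the bookkeeping about levels precise. In the backward search ``level $c$ contains all posets that can be sorted with $C-c$ comparisons,'' but the algorithm only stores posets and only discovers a poset's membership at the level where it is first produced as a predecessor\,---\,so I need to argue that when $P^+$ sits at level $d$, its reverse-edge partner $\pchild{P}{v}{u}$ has genuinely already been produced on some level $\ge c+1$ by the time the backward search processes level $c$. This requires invoking the breadth-first order of the backward search: all levels $c+1, c+2, \dots, C$ are fully processed before level $c$, and $\pchild{P}{v}{u}$, being sortable in at most $C - c - 1$ comparisons (it is a child of $P$, itself sortable in $\le C-c$, along the ``other'' outcome), appears on some level in that already-processed range by a parallel induction. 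I would state this as an auxiliary claim\,---\,``every congruence class of posets sortable in $\le C - c$ comparisons and having at most $c$ Hasse edges lies in $B_c$''\,---\,and prove the lemma as its specialization, since the forward-search reachability of $P$ only enters through the edge-count bound. The one genuinely delicate point is confirming that the totally ordered poset and, inductively, all the posets we care about do satisfy the at-most-$c$-edges bound at the level where they are needed; this is exactly where \cref{easyObservation} and the inequality $C \ge n - 1$ are used.
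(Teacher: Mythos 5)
Your overall shape is right (top-down induction, using the Hasse-edge fact from \cref{easyObservation} to see that $P$ is a potential predecessor, and the at-most-$c$-edges bound to survive the optimization), but the inductive step has a genuine gap in the level bookkeeping, and it leads you to claim something false. The induction hypothesis only gives $P^+=\pchild{P}{u}{v}\in B_d$ for \emph{some} $d\ge c+1$, and the backward search generates predecessors of a poset sitting on level $d$ and places them on level $d-1$; moreover, the acceptance test for that predecessor requires the reverse-edge poset $\pchild{P}{v}{u}$ to be found on a level $\ge d$ (see \cref{fig:bw_search_tree}), not merely on a level $\ge c+1$ as you assert (``so it lies on some level $\ge c+1>c$, which is exactly the condition''). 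If $d>c+1$ and the other child happens to occur only on levels strictly below $d$, your argument does not produce $P$ at all. The paper closes exactly this hole: it applies the induction hypothesis to \emph{both} children $P',P''$, and then chooses $\tilde c$ maximal with $P',P''\in\bigcup_{d\ge\tilde c+1}B_d$; by maximality one child lies exactly on level $\tilde c+1$ while the other lies on some level $\ge\tilde c+1$, so the acceptance test succeeds when that child is processed, and $P$ lands in $B_{\tilde c}$ with $\tilde c\ge c$ (the edge bound $c\le\tilde c$ then defeats the optimization). In other words, you must process the child whose highest backward-search level is the \emph{smaller} of the two; picking ``the one'' arbitrarily does not work.

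Relatedly, your conclusion ``$P\in B_c$'' and your auxiliary claim ``every congruence class of posets sortable in $\le C-c$ comparisons with at most $c$ Hasse edges lies in $B_c$'' are too strong and in fact false: a totally ordered poset lies only in $B_C$ (it is never generated as a predecessor since it has no successors), and a poset needing exactly one more comparison lies only in $B_{C-1}$, even if the forward search reaches it after far fewer than $C-1$ comparisons. This is precisely why the lemma is stated with $\bigcup_{d\ge c}B_d$ rather than $B_c$: the predecessor derived from a poset in $B_{\tilde c+1}$ lands in $B_{\tilde c}$, and all one can guarantee is $\tilde c\ge c$. Your base-case remarks (total order has $n-1\le C$ Hasse edges, and forward-reachable posets at level $c$ have at most $c$ Hasse edges by \cref{easyObservation}) are fine and match the paper's use of the optimization bound.
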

  \begin{proof}
    Let $P'$ and $P''$ be the successors of $P$ in the forward search witnessing that $P$ is sortable in at most $C-c$ comparisons. In particular, $P'$ and $P''$ are sortable in $C-c-1$ comparisons. By induction, we have $P', P'' \in \bigcup_{d \geq c+1} B_d$. Let $\tilde c$ be maximal with  $P', P'' \in \bigcup_{d \geq \tilde c+1} B_d$. Then, without the optimization, we would have $P \in B_{\tilde c}$. Now, by \cref{easyObservation} the Hasse diagram of $P$ contains at most $c\leq \tilde c$ edges; thus, the optimization does not prevent $P \in B_{\tilde c}$.
  \end{proof}

  In particular, the backward search with the optimization is still complete: we can check whether $n$ elements are sortable using $C$ comparisons by checking whether $P_0$ is contained in $B_0$.
  However, even with that optimization, the search space for the backward search is much larger than for the forward search.
  Indeed, when proving that sorting 13 elements requires 34 comparisons, the backward search traverses $682\cdot 10^6$ posets compared to $1.8\cdot 10^6$ for the forward search.
  For further details see \cref{fig:count_13} and \cref{fig:eff_distr_n13_c18}.
  This makes the backward search on its own impractical.

  \begin{figure*}
    \centering
    \includegraphics[width=.8\textwidth]{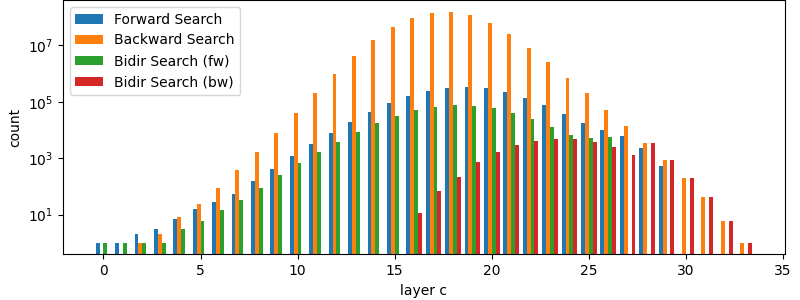}
    \caption{Comparison of the search space size for forward, backward and bidirectional search when searching an algorithm for sorting $13$ elements with $33$ comparisons. For the bidirectional search, an efficiency bandwidth of $0.05$ has been used. Be aware of the logarithmic scale of the $y$-axis.}
    \label{fig:count_13}
  \end{figure*}

  \begin{figure*}
    \centering
    \includegraphics[width=.8\textwidth]{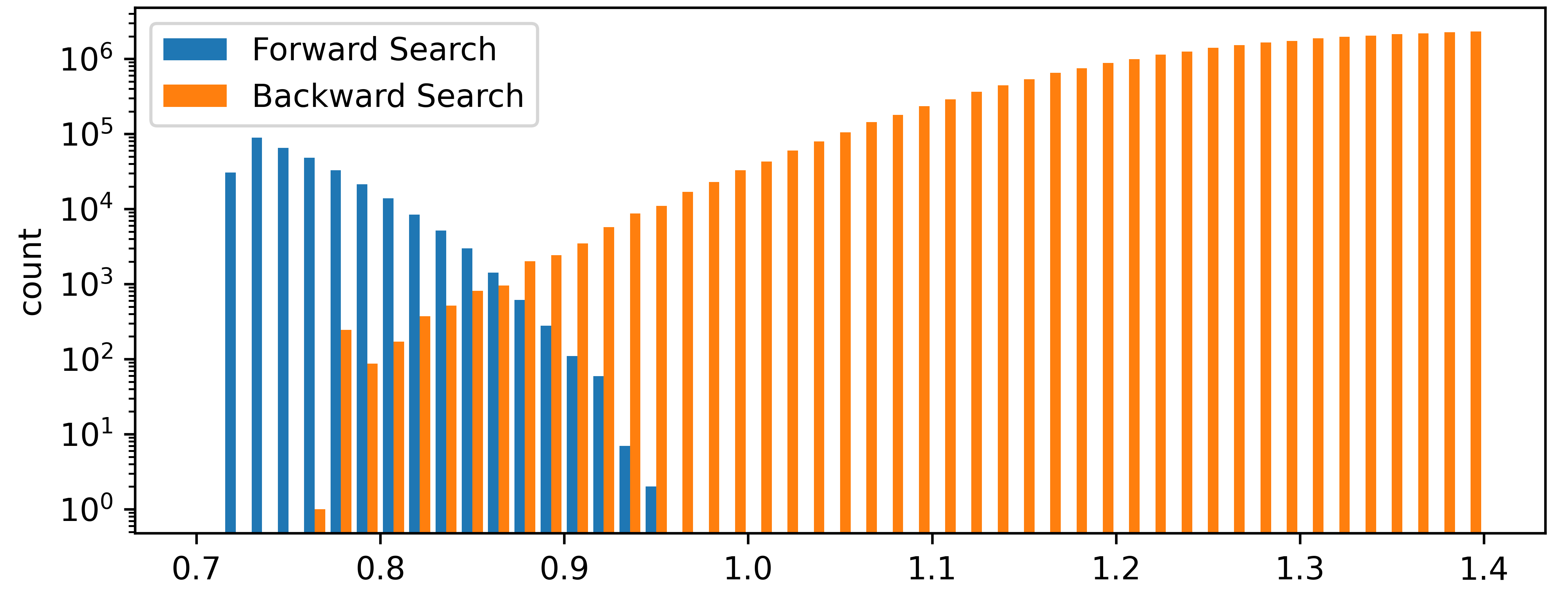}
    \caption{Histogram of efficiency of posets encountered by forward search on layer 18 compared with backward search, when searching an algorithm for sorting $13$ elements with $33$ comparisons.
    Be aware of the logarithmic scale of the $y$-axis.}
    \label{fig:eff_distr_n13_c18}
  \end{figure*}

  \subsection{Bidirectional Search.}

  \newcommand{\fullStart}{C_f}

  Comparing the posets traversed by the forward search with the posets encountered during a backward search, we observe that the intersection is rather small.
  This can be seen in \cref{fig:effdistr13}.
  The majority of posets encountered by the forward search has an efficiency smaller than $1$, whereas most posets encountered by the backward search have a much larger efficiency.
  We use this to our advantage, combining both algorithms.

  \paragraph{Partial Backward Search.}
  We modify the backward search as follows: we fix some constant $0 \leq \fullStart \leq C$. 
  After running the backward search for the layers layers $c \geq \fullStart$ as usual (called the \emph{full layers}), for the remaining layers (with $c < \fullStart$) we only store posets with an efficiency bounded by a user-specified threshold $\effThr$ with $\effTot \leq \effThr< 2\cdot \effTot$ where $\effTot$ is the efficiency of a total order obtained after $C$ comparisons (for the precise values of $\fullStart$ and $\effThr$, see \cref{sec:experiments}).
    We call $\effThr - \effTot$ the \emph{efficiency bandwidth}.
  The smaller $\effThr$, the less posets need to be searched and the faster the backward search.
  Note that, if $\effThr < 1$, the search will not find the unordered poset $P_0$ on level $0$.
  In particular, with our modification the backward search on its own is no longer sufficient to decide if there is a sorting algorithm.

  Another issue with the partial backward search is that we can obtain a potential predecessor whose reverse edge poset is above the efficiency threshold.
  In that case we cannot decide whether it is a predecessor or not.
  We add a flag indicating that it is unknown whether the poset is sortable using the remaining number of comparisons.
  \begin{figure*}
    \centering
    \includegraphics[width=.8\textwidth]{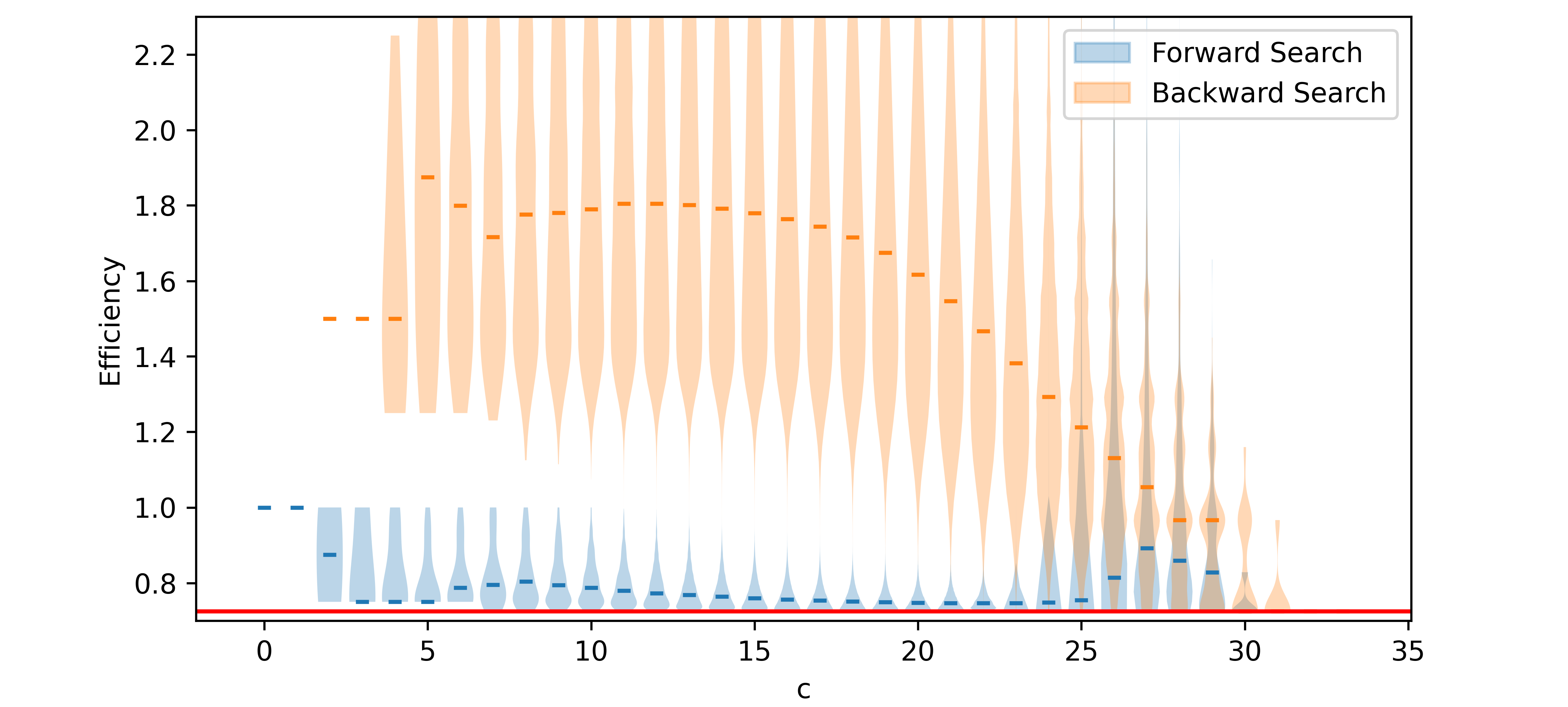}
    \caption{Comparison of efficiency distribution of posets encountered by the forward search vs.\ backward search when searching for an algorithm to sort $13$ elements using $33$ comparisons.
    Note that for most layers the backward search finds a small number of posets with efficiency less than 1, despite this not being visible in the visualization.}
    \label{fig:effdistr13}
  \end{figure*}
  Still, we have the following partial completeness result, where $B'_c$ denotes the set of (congruence classes of) posets obtained by the partial backward search on level $c$.
  \begin{lemma}
    \label{lem:partialBackwardSearch}
    Let $P$ be a poset obtained by the forward search in $c$ comparisons and sortable in at most $ C-c$ comparisons.
    Then
    \begin{enumerate}
      \item If $c \geq \fullStart$, then $P \in \bigcup_{d \geq c} B'_d$.
      \item If $\eff(P) \leq \effThr$, then $P \in B'_c$.
    \end{enumerate}
  \end{lemma}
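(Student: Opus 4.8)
The plan is to mirror the proof of \cref{lem:backwardSearchComplete}, but keeping careful track of the two ways a poset can fail to be stored in the partial backward search: either because its Hasse diagram has too many edges (the old optimization) or because its efficiency exceeds $\effThr$ (the new partial-search restriction). I would prove the two items together by a single induction on $C-c$, i.e., from the top layer downwards.

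For the base case, a poset sortable in $0$ comparisons is a total order, which is handled on layer $C$; since $C \geq \fullStart$ and a total order has efficiency exactly $\effTot \leq \effThr$, both items hold trivially. For the inductive step, let $P$ be obtained by the forward search in $c$ comparisons and sortable in at most $C-c$ comparisons, and let $P', P''$ be its two forward successors witnessing sortability in $C-c-1$ comparisons. First I would handle item~(1): assuming $c \geq \fullStart$, the successors are on layers $\geq c+1 \geq \fullStart$, so by the inductive hypothesis (applied to item~(1) if $c+1 \geq \fullStart$, which holds) $P', P'' \in \bigcup_{d \geq c+1} B'_d$; taking $\tilde c$ maximal with $P',P'' \in \bigcup_{d \geq \tilde c + 1} B'_d$, the poset $P$ is produced as a predecessor on layer $\tilde c$, and since $\tilde c \geq c \geq \fullStart$ the layer is full, so the efficiency threshold does not apply; the edge-count optimization does not block it either because, by \cref{easyObservation}, the Hasse diagram of $P$ has at most $c \leq \tilde c$ edges. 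Hence $P \in B'_{\tilde c} \subseteq \bigcup_{d \geq c} B'_d$.

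For item~(2), assume $\eff(P) \leq \effThr$. By \cref{easyObservation3}, one of $P', P''$ has efficiency at most $\eff(P) \leq \effThr$; say $\eff(P') \leq \effThr$. Then the inductive hypothesis for item~(2) gives $P' \in B'_{c+1}$, and the inductive hypothesis for item~(2) applied to $P''$ (whose efficiency may be large) is not available — but we only need \emph{one} successor to be locatable in order to generate $P$ as a predecessor on the correct layer, provided the reverse-edge poset of the relevant potential predecessor can be resolved. Here the key point is that the reverse-edge poset that needs checking is exactly (congruent to) $P''$ viewed on a higher layer or with the unknown-flag; and since $P''$ is a forward successor it is sortable in $C-c-1$ comparisons, so it is genuinely a predecessor regardless of whether the backward search could certify it — the completeness statement only claims $P \in B'_c$, and the partial backward search does store $P$ on layer $c$ as long as $\eff(P) \leq \effThr$, which is our hypothesis, and as long as its Hasse diagram is not too large, which holds by \cref{easyObservation} since it has at most $c$ edges. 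So $P \in B'_c$.

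The main obstacle I anticipate is the interaction in item~(2) between the efficiency threshold and the unknown-sortability flag: when a potential predecessor's reverse-edge poset lies above $\effThr$, the partial backward search cannot certify the predecessor relation and instead marks $P$ with the unknown flag. I would need to argue that $P$ is nonetheless \emph{stored} in $B'_c$ (the flag does not remove it from $B'_c$), so that the completeness statement, which only asserts membership in $B'_c$, still holds; the delicate part is checking that the generation mechanism of the backward search does visit $P$ at all in this flagged case, i.e., that enumerating potential predecessors via Hasse-diagram edges of $P'$ (or of $P''$) really reaches $P$. Pinning this down precisely will rely on the predecessor-enumeration procedure of \cref{subsec:predecessorenumeration} and on \cref{easyObservation}, and is where I would spend most of the effort.
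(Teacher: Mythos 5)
Your proposal is correct and follows essentially the same route as the paper: item~(1) reduces to \cref{lem:backwardSearchComplete} because all layers $d \ge c \ge \fullStart$ are full, and item~(2) is the paper's downward induction~-- the successor with the larger number of linear extensions stays below $\effThr$ (as in \cref{easyObservation3}), hence lies in $B'_{c+1}$ by induction, and $P$ is then stored on layer $c$, possibly only with the ``unknown'' flag, which is exactly how the paper dispatches the obstacle you anticipate. The only detail worth adding is the paper's one-line remark that $\eff(P)\le\effThr<2\cdot\effTot$ forces $P$ not to be a total order, so the two forward successors you invoke in the inductive step actually exist.
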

  \begin{proof}
    The first point is due to \cref{lem:backwardSearchComplete}.
    We prove the second point by induction on $c$.
    For $c = C$, the poset $P$ is sortable in zero comparisons and, hence, by definition $P \in B_C'$.
    
    For $c < C$ we have $\eff(P) \leq \effThr < 2\cdot \effTot$; hence, $P$ is not a total order.
    Let $P'$ and $P''$ be the successors of $P$ in the forward search witnessing that $P$ is sortable in at most $C-c$ comparisons.
    Without loss of generality $e(P') \geq e(P'')$.
    Therefore, we have $e(P') \geq e(P)/2$ and, thus, $\eff(P') \leq \eff(P) \leq \effThr$ (like in \cref{easyObservation3}). By induction, we have $P' \in B'_{c+1}$.
    As $\eff(P) \leq \effThr$, also $P$ has been taken into $B'_c$ (either marked as ``unknown'' or not).
  \end{proof}

  \paragraph{Forward Search with Advice.}
  After finishing the partial backward search with the threshold, we perform the usual forward search.
  Now, we use the information obtained by the backward search to speed it up:
  whenever the forward search encounters a poset with an efficiency at most the threshold $\effThr$, it is looked up in the backward search results.
  If it is not contained, then by \cref{lem:partialBackwardSearch} it is not sortable within the remaining number of comparisons.
  If it is contained in the backward search results and not marked as ``unknown'', then it is sortable.
  If it is contained in the backward search results and marked as ``unknown'', then the forward search has to proceed further.
  As the majority of posets encountered by the forward search have a small efficiency, we can decide those posets using the results from the backward search.
  Thus, the size of the search space is reduced significantly~-- see \cref{fig:count_13}.

  \section{Implementation Details}\label{sec:details}

  Our algorithm begins with a partial backward search.
  Once complete, the results are loaded into memory and the forward search is started with the poset $P_0$ on level 0.
  The forward search proceeds level by level.
  On each level the algorithm goes through the following phases:
  \begin{description}
    \item[Phase 0] The posets to be explored are sorted according to their number of linear extensions.
    \item[Phase 1] The children of each poset are generated and one child of each comparison is added to the active posets of the child layer.
    If there are too many (more than a preset limit), we do this in several chunks and the following phases are executed for each chunk separately.
    The algorithm continues with Phase 0 of the child layer for the current chunk.
    \item[Phase 2] For each sortable poset on the child layer we add the other outcome of the associated comparison to the active posets of the child layer.
    The algorithm continues with Phase 0 of the child layer (for the newly added posets).
    \item[Phase 3] The posets are marked as sortable resp.\ not sortable depending on the results in the child layer.
    The active posets on the child layer are moved to the separate storage for old posets.
    If there are more chunks of child posets to be explored, the algorithm continues with Phase 1 for the next chunk;
    otherwise,
    the algorithm continues with the parent layer.
  \end{description}

  \subsection{Memory Management.}\label{subsec:memory}

  We use SSDs and hard disk drives to increase the amount of posets we can store.
  Let us describe the different phases of the algorithm separately.

  \paragraph{Backward Search.}
  We keep only two layers of the search tree in memory.
  The layer that is currently being processed, and the layer above that is being created.
  The remaining layers are stored on the hard drive.

  \paragraph{Forward Search.}
  We split the memory into two parts.
  One for active posets, which are the posets on the layer being processed and their parents (all the way up to $P_0$ on layer 0~-- these are exactly those posets which the algorithm has already started to explore but which are not yet known to be sortable or not sortable).
  As described above, we limit the maximum number of posets per layer.
  If there are more than the limit, we split them into different chunks (the size of the limit) and continue the search for each chunk separately.
  We keep them in an array of fixed size, three times the limit per layer.
  Posets on parent layers that do not fit in the array are stored on the hard drive.

  The second part of memory is for old generation posets.
  These are posets encountered when processing the layer in a previous iteration.
  We allocate a hashmap of fixed size for each layer to store them.
  The posets themselves are stored on an SSD drive and we keep only part of their hash in the main memory.
  Since most negative lookups only access the main memory, this results in a good trade-off between memory usage and lookup performance under the assumption that most lookups will have a negative result (which, indeed, is the case).
  When the map fills up we prefer to store sortable posets over unsortable ones, as they are potentially harder to compute.
  Note that we do not store all posets due to the fixed size of the map and a limited amount of main memory and SSD storage.
  Therefore, some posets have to be explored multiple times.

  \paragraph{Bidirectional Search.}
  In addition to the techniques presented above, when performing the forward search, the entire result of the partial backward search is loaded into main memory.

  \subsection{Parallelization.}

  Exploiting parallelism is mostly straight-forward.
  The presented algorithms explore the search space in a breadth-first search manner.
  We process each layer in parallel, by splitting the posets into several chucks and distributing them to the available threads such that the work is split evenly.
  The biggest challenge is the data structure that stores the posets of the next level.
  Those posets are generated during the parallel phase.
  Therefore, we need a hashmap that supports parallel reads and writes, while minimizing contention.
  We solve this issue using a two-level hash map.
  The first level has a constant size, is immutable, and, therefore does not need any safeguards for concurrent access.
  Each map on the second level is guarded by a separate mutex.
  We choose the first layer sufficiently large, to keep the chance of two threads accessing the same map at the same time low, thus reducing contention.

  \subsection{Poset Representation.}\label{subsec:poset-representation}

  A poset is a DAG\@.
  When storing it in memory we omit transitive edges, \ie we store its Hasse diagram.
  The DAG is stored as an adjacency matrix.
  Storing an adjacency matrix with $n$ nodes usually requires $n^2$ bits of memory.
  To reduce that number, we sort the nodes topologically, \ie we reorder them such that for each edge $(i, j)$ we have $i < j$.
  This is possible because the graph is a DAG\@.
  By doing so, we only need to store an upper triangular matrix meaning that we only need $n(n-1)/2$ bits.

  \paragraph{Canonical Poset Representations.}\label{par:canonicalrep}
  Isomorphic or congruent graphs can have different adjacency matrices.
  To mitigate this, we compute a canonical representation for each poset.
  For that purpose we use a technique known as color refinement (see~\cite{KieferM20} for some recent results and further references).
  The basic algorithm can be described as follows: start by assigning the same color to every vertex.
  While there are two vertices of the same color for which the number of neighbours of one specific color disagrees, give those two vertices different colors.
  We implement this using a hashing-based approach:
  Instead of explicitly counting neighbours of a specific color, we compute for each node a hash value based on the old colors of the node and its neighbours.
  The hash value is the new color.
  Here we give the pseudo-code:
  \begin{algorithmic}[1]
    \For{$v \in \universe$}
      \State degHash[$v$]\!\! $\gets$\!\! \textsc{hashFn}($\operatorname{in-degree}(v)$,
      \State\qquad\qquad\qquad\qquad\quad\,\, $\operatorname{out-degree}(v)$)
      \State hash[$v$] $\gets$ degHash[$v$]
    \EndFor
    \For{number of rounds}
      \For{$v \in \universe$}
        \State sum[$v$] $\gets$   hash[$v$]$ +\, \smash{\sum_{(u,v) \in E }}$ hash[$u$]
        \State\qquad\qquad\qquad\quad\,\, $ +\, \smash{\sum_{(v,u) \in E }}$  hash[$u$]
      \EndFor
      \For{$v \in \universe$}
        \State hash[$v$] $\gets$  \textsc{hashFn}(sum[$v$], degHash[$v$])
      \EndFor
    \EndFor
  \end{algorithmic}

  Two nodes having a different set of neighbours are likely to get a different hash value.
  There is a small chance that our hash-based algorithm leads to a worse coloring than an exact algorithm.
  However, we happily accept that, to have a faster algorithm.

  The canonical representation of the poset is created by ordering the nodes of the DAG based on the colors obtained by the color refinement algorithm.
  In some cases multiple vertices have the same color meaning that we do not obtain a unique order.
  This can have two reasons: first, the DAG could have a non-trivial automorphism (note that if two nodes $u$, $v$ are mapped to each other under some automorphism, then clearly they will have the same color).
  Second, our color refinement algorithm simply could have failed to tell the two nodes apart.
  We added two easy heuristics to detect and deal with the first case:
  \begin{itemize}
    \item If there are two vertices $u$, $v$ with the same color but no other vertex shares this color, we exchange the rows and columns corresponding to $u$ and $v$ in the adjacency matrix and test whether it is bitwise equal to the original one.
    If this is the case, we know that $u$ and $v$ are mapped to each other by some automorphism and for the canonical ordering it does not matter whether $u$ or $v$ comes first as the adjacency matrix is the same in any case.
    \item If there are $k > 2$ vertices $u_1, \dots, u_k$ with the same color but no other vertex shares this color, we test whether the full symmetric group $S_k$ acts as automorphisms of our DAG.
    As $S_k$ is generated by a transposition and a $k$-cycle, we only need to test whether the adjacency matrix is invariant under these two transformations.
    This is done as before.
  \end{itemize}
  If these two heuristics do not tell us that we are in the case of a non-trivial automorphism, we assume that we are in the second case, \ie that the color refinement algorithm has failed to produce a canonical representation.

  Because of this possibility, we add two additional bits to the graph representation.
  If we could not compute a canonical representation, we set a bit to indicate that the representation is not unique, and handle this poset differently when computing hash values and testing for congruence/isomorphism.
  The other bit is set in the case that there is no unique representation with respect to the dual of the graph (we do not discuss this in detail as this issue is similar but rather simpler than the one of non-trivial automorphisms).

  \subsection{Hashing and Isomorphism Test.}\label{subsec:hashing}

  For posets for which we have a canonical representation (which we know by looking at the two bits in the poset representation introduced in \cref{par:canonicalrep}), we compute the hash value from the adjacency matrix.
  Congruence or equality of posets is tested by comparing the adjacency matrices bitwise.

  For the remaining posets the hash value is computed from the graph structure, using the hash based color refinement algorithm presented in \cref{subsec:poset-representation}.
  After computing the colors for each vertex, we compute a single hash of all colors that is independent of the vertex order.
  Congruence testing is done using a graph isomorphism test (using \texttt{boost::vf2\_graph\_iso}), potentially also involving the dual.
  That is an expensive computation, but it is only required for few posets.
  When searching for an algorithm to sort 17 elements using 49 comparisons, about 1.4\% of all isomorphism tests were carried out using the Boost graph isomorphism test.
  The remaining 98.6\% were done using a bitwise comparison of the canonical representation.

  \subsection{Heuristics for Search Space Reduction.}\label{subsec:searchspacereduction}

  \paragraph{Sort Posets.}
  When processing posets in the forward search, we order them by their number of linear extensions.
  The order influences how many posets are explored on the child layer.
  In particular, for each comparison it is sufficient to explore one child.
  If one child has already been selected by a different comparison explored before, then the other child will not be selected additionally, even if the selected child is the one with fewer linear extensions.
  In our experiments, ordering the posets on the parent layer by their number of linear extensions, reduced the size of the child layer.

  \paragraph{Bound on the Number of Pairs.}
  A pair is a connected component of a poset consisting of exactly two vertices.
  A comparison creating a pair is independent of comparisons not involving the two vertices of the pair.
  Considering a sequence of comparisons, a comparison creating a pair can be swapped with its predecessor resp.\ successor comparison if they are independent.
  We use this to our advantage, by only considering sequences where comparisons creating a pair appear as late as possible.
  For the forward search this means that, if we create a pair, then either the next comparison involves a vertex of the pair or another pair is created and the comparison thereafter involves one vertex of each pair.
  For the backward search it is even simpler.
  Whenever a poset contains a pair, we only consider the predecessor without the pair that will lead to the poset by creating the pair.

 \begin{figure*}
	\centering
	\begin{subfigure}[c]{0.25\textwidth}
		\centering
		\begin{tikzpicture}[xscale=0.4,yscale=0.4]
			\coordinate (1) at (-0.6, 0);
			\coordinate (2) at (2.6, 0);
			\coordinate (3) at (1, 2);
			\coordinate (4) at (1, 4);
			\coordinate (5) at (-0.6, 6);
			\coordinate (6) at (2.6, 6);
			\coordinate (7) at (1, 8);
			\fill (1) circle (5pt) node[left]{1};
			\fill (2) circle (5pt) node[left]{2};
			\fill (3) circle (5pt) node[left]{3};
			\fill (4) circle (5pt) node[left]{4};
			\fill (5) circle (5pt) node[left]{5};
			\fill (6) circle (5pt) node[left]{6};
			\fill (7) circle (5pt) node[left]{7};
			
			\draw[] (1) -- (3);
			\draw[] (2) -- (3);
			\draw[,thick] (3) -- (4);
			\draw[] (4) -- (5);
			\draw[] (4) -- (6);
			\draw[] (5) -- (7);
			\draw[] (6) -- (7);
		\end{tikzpicture}
		\subcaption{A poset.}
		
	\end{subfigure}
	\begin{subfigure}[c]{0.25\textwidth}
		\centering
		\begin{tikzpicture}[xscale=0.4,yscale=0.4]
			\coordinate (1) at (-0.6, 0);
			\coordinate (2) at (2.6, 0);
			\coordinate (3) at (1, 2);
			\coordinate (4) at (1, 4);
			\coordinate (5) at (-0.6, 6);
			\coordinate (6) at (2.6, 6);
			\coordinate (7) at (1, 8);
			
			\draw[] (1) -- (3);
			\draw[] (2) -- (3);
			\draw[] (4) -- (5);
			\draw[] (4) -- (6);
			\draw[] (5) -- (7);
			\draw[] (6) -- (7);
			
			\draw[gray] (1) -- (4);
			\draw[gray] (1) -- (5);
			\draw[gray] (1) -- (6);
			\draw[gray] (1) -- (7);
			\draw[gray] (2) -- (4);
			\draw[gray] (2) -- (5);
			\draw[gray] (2) -- (6);
			\draw[gray] (2) -- (7);
			\draw[gray] (3) -- (5);
			\draw[gray] (3) -- (6);
			\draw[gray] (3) to[out=110,in=257] (7);
			
			\fill (1) circle (5pt);
			\fill (2) circle (5pt);
			\fill (3) circle (5pt);
			\fill (4) circle (5pt);
			\fill (5) circle (5pt);
			\fill (6) circle (5pt);
			\fill (7) circle (5pt);
		\end{tikzpicture}
		\subcaption{Poset without the selected comparison.}
		
	\end{subfigure}
	\begin{subfigure}[c]{0.4\textwidth}
		\small
		\begin{tabular}{|lr|l|l|l|l|l|l|l|l|}
			\cline{3-9}
			\multicolumn{1}{l}{} & & \multicolumn{7}{c|}{target} \\ \cline{3-9}
			\multicolumn{1}{l}{} &                         & 1 & 2 & 3 & 4              & 5 & 6 & 7               \\ \hline
			\multirow{7}{*}{\begin{sideways}
					source
			\end{sideways}}
			& \multicolumn{1}{|r|}{1} &   & 0 & 1 & ?              & ? & ? & ?               \\ \cline{2-9}
			& \multicolumn{1}{|r|}{2} &   &   & 1 & ?              & ? & ? & ?               \\ \cline{2-9}
			& \multicolumn{1}{|r|}{3} &   &   &   & {\color{red}0} & ? & ? & ?               \\ \cline{2-9}
			& \multicolumn{1}{|r|}{4} &   &   &   &                & 1 & 1 & {\color{gray}1} \\ \cline{2-9}
			& \multicolumn{1}{|r|}{5} &   &   &   &                &   & 0 & 1               \\ \cline{2-9}
			& \multicolumn{1}{|r|}{6} &   &   &   &                &   &   & 1               \\ \cline{2-9}
			& \multicolumn{1}{|r|}{7} &   &   &   &                &   &   &                 \\ \hline
		\end{tabular}
		
		\subcaption{Adjacency matrix of potential predecessors.}
	\end{subfigure}
	\caption{Predecessor enumeration example.}
	\label{fig:predecessor_enumeration}
\end{figure*}
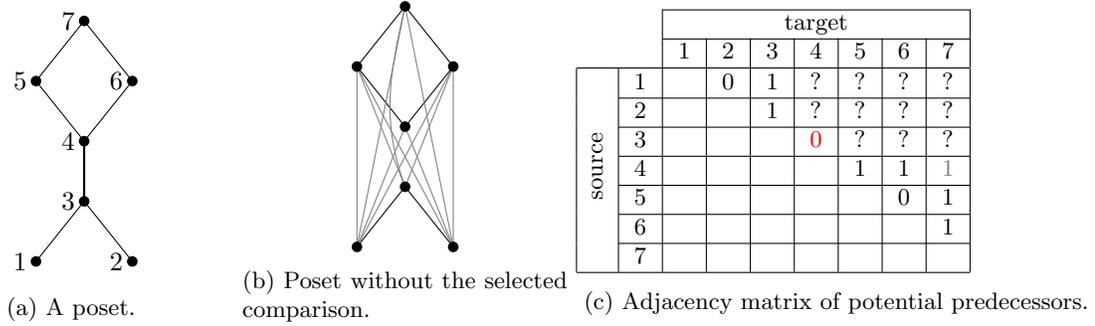

  \subsection{Computation of Linear Extensions.}

  For computing the number of linear extensions, we implemented the algorithm proposed by Peczarski in~\cite{peczarski2012towards}, which has been introduced in~\cite{LoofMB06}.
  As outlined in~\cite{peczarski2012towards} this is a big improvement compared to previous algorithms.
  Let $ P = (\universe, \preceq)$ be a poset.
  A subset $D\sse \universe$ is called a \emph{down-set} if $x \in D$ and $y \preceq x$ implies that $y \in D$.
  We denote the restriction of $\preceq$ to $D \times D$ also with $\preceq$ and write $P_D = (D,\preceq)$.
  We write $\cD(P)$ for the set of down-sets of $P$.
  By~\cite{LoofMB06} we have
  \begin{align}
    e(P_D) = \sum_{\substack{u \in D\\D \setminus\{u\} \in \cD(P_D)}} e(P_{D \setminus\{u\}}).\label{eq:recursion_downsets}
  \end{align}
  The same way we can define an \emph{up-set} $V$ and compute its number of linear extensions.
  We write $\cU(P)$ for the set of up-sets of $P$.
  Note that $V$ is an up-set if and only if $\universe \setminus V$ is a down-set.
  \newcommand{\DU}{\operatorname{DU}}%
  Let
  \vspace{-1mm}
  \begin{align*}
    \DU(u,v)  &= \bigl\{(D,V) \in \cD(P) \times \cU(P) \;\big| \;  v \in V,\\
    &\qquad \qquad \{u\} = \universe \setminus( D \cup V),\, D \cap V = \emptyset\bigr\}.
  \end{align*}
  \vspace{-1mm}
  Then
  \begin{align}
    e( \pchild{P}{u}{v}) = \sum_{(D,V) \in \DU(u,v) } e(P_{D })\cdot e(P_{V}).\label{eq:tablefill}
  \end{align}
  In~\cite{peczarski2012towards} the recurrence~\eqref{eq:recursion_downsets} is used to compute the $e(P_D)$ for all down-sets $D$ and $e(P_V)$ for all up-sets $V$ and then~\eqref{eq:tablefill} is used to compute the number of linear extensions resulting after each possible comparison.
  We follow the same approach.
  In order to speed up the computation of~\eqref{eq:tablefill}, we use AVX2 intrinsics.

  \paragraph*{Avoiding Many Singletons.}
  We call an element in a poset without incident edges in the DAG a \emph{singleton}.
  If a poset contains more than one singleton, we remove all except one of them before computing the linear extensions.
  We can do this safely because, when comparing a singleton with a non-singleton, for counting linear extensions it does not matter which singleton we actually compare.
  Moreover, when comparing two singletons, the number of linear extensions is simply half the number of linear extensions of the original poset.

  \paragraph*{Reducing Memory Requirements and Improved Cache Behavior.}
  In~\cite{peczarski2012towards} Peczarski uses a table of size $2^n$ for storing the numbers of linear extensions of the down-sets.
  For $n=16$, $17$, and $18$, we follow the same approach.
  However, for $n=28$ we use a smaller table since otherwise these tables would consume more than 400\,GB of memory (for 48 threads together).

  To bound the number of down-sets we have to store, we use a heuristic argument based on the following observation:

  \begin{observation}
    Let $(\universe, \preceq)$ be a partial order and $M \sse \binom{\universe}{2}$ be a perfect matching (\ie for all $u \in \universe$ there is some $e \in M$ with $u \in e$ and $e \cap f = \emptyset $ for all $e\neq f \in M$).
    If $M  \sse {\preceq}$, then the number of down-sets of $(\universe, \preceq)$ is bounded by $\sqrt{3}^{\abs{\universe}}$.
  \end{observation}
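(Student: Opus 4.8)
The plan is to bound the number of down-sets of $(\universe,\preceq)$ by analyzing the contribution of each matched pair independently and then combining. Write $M = \{e_1,\dots,e_m\}$ with $m = \abs{\universe}/2$, and say $e_i = \{a_i,b_i\}$ with $a_i \preceq b_i$ (this is where we use $M \sse {\preceq}$). The key observation is that for any down-set $D$, its intersection with a pair $e_i$ can only be one of three sets: $\emptyset$, $\{a_i\}$, or $\{a_i,b_i\}$; it cannot be $\{b_i\}$, because $a_i \preceq b_i$ together with $b_i \in D$ forces $a_i \in D$. Thus the map $D \mapsto (D \cap e_1, \dots, D \cap e_m)$ sends every down-set into a set of size at most $3^m$. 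Since this map is injective (the pairs partition $\universe$, so $D$ is recovered as the union of the pieces $D \cap e_i$), we immediately get
\[
  \abs{\cD(\universe,\preceq)} \;\leq\; 3^m \;=\; 3^{\abs{\universe}/2} \;=\; \sqrt{3}^{\,\abs{\universe}}.
\]

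First I would set up the notation for the matching and fix the orientation $a_i \preceq b_i$ within each pair. Then I would state and prove the three-outcomes claim for a single pair, which is the crux and follows in one line from the definition of a down-set. After that, the injectivity of the ``restrict to each pair'' map is immediate because $M$ is a \emph{perfect} matching, so $\{e_1,\dots,e_m\}$ partitions $\universe$ and $D = \bigcup_{i=1}^m (D \cap e_i)$. Counting then gives the bound, using $\abs{\universe} = 2m$.

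I do not expect any serious obstacle here: the statement is essentially a counting argument, and the only thing one must be careful about is that the bound is on the \emph{number} of down-sets (not, say, up-sets — though the same argument applies verbatim by symmetry) and that perfectness of the matching is genuinely used, both to force the orientation to cover every vertex and to make the decomposition map injective with a clean product structure. If anything, the mild subtlety is purely expository: one should note explicitly that the argument does not require $M$ to be compatible with $\preceq$ in any stronger sense than $M \sse {\preceq}$, and that the bound $\sqrt{3}^{\abs{\universe}}$ is typically far from tight (it is only meant as a cheap heuristic cutoff for the down-set table size when $n = 28$, as discussed in the surrounding text).
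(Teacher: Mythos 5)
Your proposal is correct and rests on the same core idea as the paper's proof: each matched pair $\{a_i,b_i\}$ with $a_i\preceq b_i$ admits only the three intersections $\emptyset$, $\{a_i\}$, $\{a_i,b_i\}$ with any down-set, giving $3^{\abs{\universe}/2}=\sqrt{3}^{\abs{\universe}}$ overall. If anything, your explicit injection $D\mapsto(D\cap e_1,\dots,D\cap e_m)$ is slightly more careful than the paper's phrasing via products of disjoint unions, since it handles directly the fact that $\preceq$ may contain relations beyond $M$ (the paper implicitly uses that such extra relations can only reduce the number of down-sets).
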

  \begin{proof}
    We start with the following observation: if posets $P$ and $Q$ have $k_P$ (resp.\ $k_Q$) down-sets, then the disjoint union $P \cup Q$ has $k_P \cdot k_Q$ down-sets.
    Hence, it suffices to observe that a totally ordered two-element poset (what we call a pair) has $3$ down-sets.
  \end{proof}
  In general, a poset will not have a perfect matching.
  Still the bound $\sqrt{3}^{\abs{\universe}}$ turns out to be a good heuristic for posets appearing during our algorithm and which have at most one singleton element.
  As we do not have a proof for our bound $\sqrt{3}^{\abs{\universe}}$, we allocate space for $\sqrt{3}^{\abs{\universe}+2}$ down-sets and include a check whether this space actually suffices.
  In \cref{tab:downsets} it can be seen that $\sqrt{3}^{\abs{\universe}}$ is actually quite good estimate for the number of down-sets.
  \begin{table*}
    \small
    \caption{The maximum number of down-sets of $n$ element posets occurring in the bidirectional search.}\label{tab:downsets}
    \centering\begin{tabular}{|c|c|c|c|c|c|c|c|c|c|c|c|}
                \hline
                $n$                                     & 11  & 12  & 13  & 14   & 15   & 16   & 17   & 18   & 19    & 22    & 28      \\
                \hline
                $\sqrt{3}^{n}$ & 421
                & 729 & 1263 & 2187 & 3788 & 6561 & 11364 & 19683 & 34092 & 177147 & 4782969 \\
                \hline
                \parbox{2cm}{max.\ number of down-sets} & 342 & 441 & 882 & 1554 & 2565 & 6156 & 9236 & 9702 & 15246 & 51102 & 1058841 \\
                \hline
    \end{tabular}
  \end{table*}

  In order to use the smaller table for down-sets, we also have to store the down-sets themselves as they cannot be read from the index as in the case of a table of size $2^n$.
  We store the down-sets in lexicographical order (where a down-set is identified with a bit-string in a natural way).

  \subsection{Predecessor Enumeration.}\label{subsec:predecessorenumeration}

  Predecessor enumeration refers to computing all (potential) predecessors of a poset, which is required for the backward search algorithm.
  We explain predecessor enumeration with the help of the example in \cref{fig:predecessor_enumeration}.
  Given a poset $P = (\universe, \preceq)$ the outer most loop of the algorithm iterates over all edges of the Hasse diagram.
  For each edge, the algorithm needs to enumerate all posets from which $P$ can be obtained by adding the edge.
  Considering an edge $(u, v)$, in the example $(3, 4)$, we look at the set of transitive edges implied by $(u, v)$:
  \[
    S = \{(x, y) \mid x \preceq u \land v \preceq y \land (x \neq u \lor y \neq v)\}
  \]
  Every poset obtained by removing $(u, v)$ and a subset of $S$ from $\preceq$ is a potential predecessor, and every potential predecessor can be obtained this way.
  Our code works with Hasse diagrams representing partial orders.
  Let $\preceq_H$ be the edges of the Hasse diagram of $P$.
  We use the function $\textsc{enumeratePredecessors}(\{(u,v)\}, \preceq_H)$ depicted below to compute the set of all potential predecessors
  (here $R^*$ denotes the reflexive and transitive closure of $R$).
  \begin{algorithmic}[1]
    \Function{enumeratePredecessors}{$E$, $R$}
      \If{$E = \emptyset$}
        \State return $\{ R \}$
      \EndIf
      \State $(u, v) \gets$ \textsc{pop}($E$)
      \State $A \gets $ \textsc{enumeratePredecessors}($E$, $R$)
      \State $R_1 \gets R \setminus \{ (u, v) \}$
      \State $S' = \{(x, v) \mid (x, u) \in R \land (x, v) \notin R_1^*\}$
      \State\qquad\qquad $\cup \{(u, y) \mid (v, y) \in R \land (u, y) \notin R_1^*\}$
      \State $B \gets $ \textsc{enumeratePredecessors}($E \cup S'$,
      \State\qquad\qquad $R_1 \cup S'$)
      \State return $A \cup B$
    \EndFunction
  \end{algorithmic}

  \begin{lemma}
    \label{lem:51}
    Let $R$ be a Hasse-Diagram and $(u, v) \in R$.
    Let $R_1 = R \setminus \{ (u, v) \}$ and
    \begin{align*}
      R_2 = R_1 &\cup \{(x, v) \mid (x, u) \in R \land (x, v) \notin R_1^*\} \\
      &\cup \{(u, y) \mid (v, y) \in R \land (u, y) \notin R_1^*\}\text{.}
    \end{align*}
    Let $Q$ be a Hasse-Diagram.
    Then
    \begin{enumerate}[(1)]
      \item $R_1$ and $R_2$ are Hasse-Diagrams,
      \item\label{pointTwo} $R_2^* = R^* \setminus \{ (u,v) \}$ and
      \item\label{pointThree} $(Q \cup \{(u,v)\})^* = R^*$ if and only if $R_1^* \subseteq Q^* \subseteq R^*$
    \end{enumerate}
  \end{lemma}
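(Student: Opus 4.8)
The plan is to run the whole argument through the relation $R' := R^* \setminus \{(u,v)\}$: I would show that $R'$ is a partial order and that $R_2$ is precisely its Hasse diagram (this gives (1) and (2)), and then prove (3) by a similar simple-path argument. The one structural fact used throughout is that, since $R$ is a Hasse diagram, every edge $(a,b) \in R$ is a covering relation of $R^*$, so there is no $w$ with $a \prec w \prec b$ in $R^*$. First I would check that $R'$ is a partial order: reflexivity and antisymmetry are clear, and the only case to rule out for transitivity is $(a,c),(c,b) \in R'$ with $(a,b) = (u,v)$; but then $c \notin \{u,v\}$ (else $(u,c)$ or $(c,b)$ equals $(u,v)$), so $u \prec c \prec v$, which is impossible. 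This immediately yields that $R_1 = R \setminus \{(u,v)\}$ is a Hasse diagram (a redundant edge of $R_1$ would already be redundant in $R$), and also the easy inclusion $R_2^* \subseteq R'$ of statement (2): each added edge $(x,v)$ resp.\ $(u,y)$ lies in $R^*$ and differs from $(u,v)$ (because $x \neq u$, $y \neq v$, these being the tail of the $R$-edge $(x,u)$ resp.\ the head of $(v,y)$), so $R_2 \subseteq R'$, and $R'$ is already reflexive and transitively closed.

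For the reverse inclusion in (2), I would take $(a,b) \in R'$ and a \emph{simple} path from $a$ to $b$ in the DAG $R$. If this path avoids the edge $(u,v)$, then $(a,b) \in R_1^* \subseteq R_2^*$. Otherwise it has the form $a \rightsquigarrow u \to v \rightsquigarrow b$ and, since $(a,b) \neq (u,v)$, we have $a \neq u$ or $b \neq v$. If $a \neq u$, the edge into $u$ on the path is some $(x,u) \in R$ with $(a,x) \in R_1^*$ (simplicity); by the definition of $R_2$ either $(x,v) \in R_2$ or $(x,v) \in R_1^*$, so $(a,v) \in R_2^*$, and the tail $v \rightsquigarrow b$ avoids $(u,v)$ hence lies in $R_1^*$, giving $(a,b) \in R_2^*$. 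The case $a = u$, $b \neq v$ is symmetric, using the edge $(v,y)$ leaving $v$. This proves (2). Granting $R_2^* = R'$, the remaining part of (1) is that $R_2$ has no redundant edge, i.e.\ no edge of $R_2$ has a strict intermediate in $R' = R_2^*$: an edge of $R_1 \subseteq R$ is a cover of $R^*$, hence a fortiori has no strict intermediate in the smaller relation $R'$; and for a newly added edge, a short case analysis — using that, by its defining condition ($(x,v) \notin R_1^*$ resp.\ $(u,y) \notin R_1^*$), every path between its endpoints in $R$ runs through $u$ resp.\ $v$ — again exhibits a strict intermediate of an $R$-edge (namely of $(x,u)$, $(u,v)$, or $(v,y)$), a contradiction.

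For (3), the direction ``$\Leftarrow$'' is closure algebra: $Q^* \subseteq R^*$ gives $(Q \cup \{(u,v)\})^* \subseteq R^*$ since $(u,v) \in R^*$, and $R_1^* \subseteq Q^*$ gives $R = R_1 \cup \{(u,v)\} \subseteq (Q \cup \{(u,v)\})^*$, hence $R^* \subseteq (Q \cup \{(u,v)\})^*$. For ``$\Rightarrow$'', assume $(Q \cup \{(u,v)\})^* = R^*$. Then $Q \subseteq R^*$, so $Q^* \subseteq R^*$. To obtain $R_1^* \subseteq Q^*$ it is enough to put every Hasse edge $(a,b) \in R_1$ into $Q^*$: express $(a,b)$ by a simple path in $Q \cup \{(u,v)\}$; if the path avoids $(u,v)$ we are done, and if it uses $(u,v)$ then the sub-paths from $a$ to $u$ and from $v$ to $b$ do not use $(u,v)$ (simplicity), so $(a,u),(v,b) \in Q^* \subseteq R^*$, whence $a \preceq u \prec v \preceq b$ in $R^*$; now $a = u$ is forced (otherwise $u$ lies strictly between $a$ and $b$, contradicting that $(a,b)$ is a cover of $R^*$) and likewise $b = v$, so $(a,b) = (u,v) \notin R_1$, a contradiction. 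Hence this case cannot occur and $(a,b) \in Q^*$.

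The most delicate point will be making the rerouting and no-redundant-edge case analyses in (1) and (2) airtight: one should pass to a simple path before reasoning about which edges it may contain, and then mechanically translate ``$R$ is a Hasse diagram'' into ``no vertex strictly between the endpoints of an $R$-edge'' in every case; once this template is fixed each case closes the same way. A minor but essential supporting step is the preliminary claim that $R'$ is a genuine partial order, since both (2) and (3) quietly use that $R^* \setminus \{(u,v)\}$ is transitively closed.
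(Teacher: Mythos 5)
Your proposal is correct and takes essentially the same route as the paper's proof: part (2) by rerouting a path through $(u,v)$ onto the added edges $(x,v)$ resp.\ $(u,y)$, and part (3) by exploiting that Hasse edges are covers of $R^*$, so a simple path in $Q \cup \{(u,v)\}$ realizing an edge of $R_1$ cannot use $(u,v)$. In fact you are more careful at two spots the paper treats as obvious~-- that $R^* \setminus \{(u,v)\}$ is transitively closed (no $c$ with $u \prec c \prec v$), and that the reroute may have to use $(x,v) \in R_1^*$ when $(x,v) \notin R_2$~-- so there are no gaps.
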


  \begin{proof}
  	$R_1$ being a Hasse-Diagram is obvious.
  	To see that $R_2$ is a Hasse-Diagram, observe that we only add edges $(x, u)$ (resp.\ $(v, y)$) for which there is no path $x$ to $u$ (resp.\ $v$ to $y$) in $R_1$.
  	
  	For the second statement we show that $R_2^* = R^* \setminus \{ (u,v) \}$.
  	It is obvious from the definition that $R_2^* \subseteq R^* \setminus \{ (u,v) \}$.
  	Now let $(x, y) \in R^* \setminus \{ (u,v) \}$.
  	Then, there is a directed path $\Phi$ from $x$ to $y$ in $R$.
  	If $\Phi$ does not use the edge $(u,v)$, then it exists in $R_2$ and, hence, $(x,y) \in R_2^*$.
  	Thus, we consider the case $\Phi = (x_0, \dots x_i, u, v, y_1, \dots, y_j )$ with  $x = x_0$ and $y_j = y$.
  	\Wlog $x \neq u$.
  	By definition there is an edge $(x_i, v)$ in $R_2$.
  	Hence, the path $\Phi' = (x_0, \dots x_i, v, y_1, \dots, y_j)$ exists in $R_2$ and $(x,y) \in R_2^*$.
  	
  	For the third statement assume that $(Q \cup \{(u,v)\})^* = R^*$.
  	It is obvious that $Q^* \subseteq R^*$.
  	We show that $R_1 \subseteq Q$.
  	Assume there is some $(x,y) \in R_1 \setminus Q$.
  	Then the only path from $x$ to $y$ in $R$ is the edge $(x, y)$ (otherwise $R$ would not be a Hasse-Diagram).
  	Hence, there would be no edge $(x,y)$ in $(Q \cup \{(u,v)\})^* \subseteq R^*$ contradicting $(Q \cup \{(u,v)\})^* = R^*$.
  	To see that the other direction holds, observe that $(R_1 \cup \{(u,v)\})^* = R^*$.
  \end{proof}

  \begin{lemma}
  	Let $R$ be a Hasse-Diagram and $E \subseteq R$.
  	Furthermore, let $\cX$ be the set returned by the function $\textsc{enumeratePredecessors}(E, R)$.
  	Then for every Hasse diagram $Q$ we have  $Q \in \cX$  if and only if $(R \setminus E)^* \subseteq Q^* \subseteq R^*$.
  \end{lemma}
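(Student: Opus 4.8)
The plan is to prove the equivalence by induction following the recursion of \textsc{enumeratePredecessors}, using \cref{lem:51} as the main tool. Since the first argument of the recursive call that produces $B$ can be \emph{larger} than the current set $E$, one cannot induct on $\abs{E}$ alone; instead I would induct on the pair $(\abs{R^*}, \abs{E})$ under the lexicographic order. For the call producing $A$ the Hasse diagram $R$ is unchanged while the popped edge $(u,v)$ is removed from $E$, so $\abs{E}$ strictly decreases; for the call producing $B$ the new Hasse diagram is $R_2 = R_1 \cup S'$, and by the first claim of \cref{lem:51} it is indeed a Hasse diagram, while by the second claim $R_2^* = R^* \setminus \{(u,v)\}$, so $\abs{R_2^*} = \abs{R^*} - 1$. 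Before invoking the induction hypothesis one checks the side condition $E \subseteq R$ for both calls: for $A$ this is immediate, and for $B$ it holds because $S' \subseteq R_2$ by construction and $E \setminus \{(u,v)\} \subseteq R \setminus \{(u,v)\} = R_1 \subseteq R_2$.

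For the base case $E = \emptyset$ we have $\cX = \{R\}$ and $(R \setminus E)^* = R^*$, so the claim reduces to ``$Q = R$ iff $Q^* = R^*$'', which holds since a finite poset determines its Hasse diagram uniquely and $Q$ is assumed to be one. For the inductive step write $E' = E \setminus \{(u,v)\}$ for the set after the pop and keep the notation $S'$, $R_1$, $R_2$ of the pseudocode. Applying the induction hypothesis to the $A$-call gives: $Q \in A$ iff $(R \setminus E')^* \subseteq Q^* \subseteq R^*$; since $R \setminus E' = (R \setminus E) \cup \{(u,v)\}$ and $(u,v) \in R \setminus E'$, this is equivalent to the conjunction of $(R \setminus E)^* \subseteq Q^* \subseteq R^*$ and $(u,v) \in Q^*$. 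For the $B$-call I would first record the elementary identities $R_2 \setminus (E' \cup S') = R_1 \setminus E' = R \setminus E$ (using that $R_1$ and $S'$ are disjoint, as every edge of $S'$ lies outside $R_1^* \supseteq R_1$, and that $(u,v) \in E$) together with $R_2^* = R^* \setminus \{(u,v)\}$; the induction hypothesis then yields: $Q \in B$ iff $(R \setminus E)^* \subseteq Q^* \subseteq R^* \setminus \{(u,v)\}$.

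It remains to combine these through $\cX = A \cup B$ by a case split on whether $(u,v) \in Q^*$. If $(u,v) \in Q^*$, then $Q \notin B$ and the side condition in the description of $A$ is automatically satisfied, so $Q \in \cX$ iff $(R \setminus E)^* \subseteq Q^* \subseteq R^*$. If $(u,v) \notin Q^*$, then $Q \notin A$, and ``$Q^* \subseteq R^* \setminus \{(u,v)\}$'' becomes equivalent to ``$Q^* \subseteq R^*$'', so again $Q \in \cX$ iff $(R \setminus E)^* \subseteq Q^* \subseteq R^*$. Both cases give exactly the desired characterization. I expect the only genuinely delicate points to be getting the well-foundedness of the induction right (hence the need for the $\abs{R^*}$ component, not just $\abs{E}$) and the small amount of set algebra establishing $R_2 \setminus (E' \cup S') = R \setminus E$; once \cref{lem:51} is available, everything else is routine.
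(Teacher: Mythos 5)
Your proof is correct and follows essentially the same route as the paper's: induction along the recursion, with \cref{lem:51} supplying the facts $R_2^* = R^*\setminus\{(u,v)\}$ and $R_2^*\subseteq R^*$, the identity $R_2\setminus(E'\cup S') = R\setminus E$, and a case split on the edge $(u,v)$ to combine the characterizations of $A$ and $B$. The one genuine difference is the induction measure: the paper inducts on $\abs{E}$, which as you observe is not literally well-founded because the call producing $B$ passes $E'\cup S'$, and your lexicographic measure $(\abs{R^*},\abs{E})$ is the right repair (the paper's argument is otherwise unaffected, since the inductive hypothesis is only ever invoked on the two recursive calls). A second, minor stylistic difference: in the backward direction the paper splits on $(u,v)\in Q$ versus $(u,v)\notin Q$, which silently uses that $(u,v)$ is a Hasse edge of $R^*$ to conclude $(u,v)\notin Q^*$ from $(u,v)\notin Q$ and $Q^*\subseteq R^*$; your split on $(u,v)\in Q^*$ sidesteps that subtlety. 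Both versions are sound, and your set-algebra steps (disjointness of $S'$ and $R_1$, $R_1\setminus E' = R\setminus E$, and the base case via uniqueness of Hasse diagrams) check out.
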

  
  \begin{proof}
  	We prove the lemma by induction on $|E|$.
  	If $E = \emptyset$, then $\cX = \{ R \}$ and the statement holds.
  	
  	Otherwise, let $(u, v)$ be the edge selected in line 4 of the algorithm and define $E_1 = E \setminus \{ (u, v) \}$, $R_1 = R \setminus \{ (u, v) \}$, $E_2 = E_1 \cup S'$, and $R_2 = R_1 \cup S'$ (where $S'$ is as in line 7).
  	Let $A$ and $B$ be the results of the recursive calls in lines 5 and 9.
  	
  	First, assume $Q \in \cX = A \cup B$.
  	If $Q \in A$, then by induction we have $(R \setminus E_1)^* \subseteq Q^* \subseteq R^*$.
  	As $R \setminus E \subseteq R \setminus E_1$, it follows that $(R \setminus E)^* \subseteq Q^*$.
  	If $Q \in B$, then by induction we have $(R_2 \setminus E_2)^* \subseteq Q^* \subseteq R_2^*$.
  	Since  $R \setminus E = R_2 \setminus E_2$ and $R_2^* \subseteq R^*$ (by \cref{lem:51}), we have $(R \setminus E)^* \subseteq Q^* \subseteq R^*$.
  	
  	Second, assume $(R \setminus E)^* \subseteq Q^* \subseteq R^*$.
  	If $(u, v) \in Q$, then $(R \setminus E_1)^* \subseteq Q^*$ and, by induction, $Q \in A$.
  	If $(u, v) \notin Q$, then $Q^* \subseteq R^* \setminus \{ (u, v) \} = R_2^*$ and by induction $Q \in B$.
  	The last equality is due to \cref{lem:51} \eqref{pointTwo}.
  \end{proof}

  By \cref{lem:51}~\eqref{pointThree} we obtain the following corollary.
  
  \begin{corollary}
    Let $P = (\universe, R)$ be a poset with Hasse diagram $R_H$ and $(u, v) \in R_H$.
    Moreover, let $\cX$ be the set returned by $\textsc{enumeratePredecessors}(\{ (u, v) \}, R_H)$. Then for every Hasse diagram $Q$ we have $Q \in \cX$ if and only if $R = (Q \cup \{(u,v)\})^*$.
  \end{corollary}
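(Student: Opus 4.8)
The plan is to obtain the corollary as an immediate consequence of the preceding lemma on \textsc{enumeratePredecessors} together with part~\eqref{pointThree} of \cref{lem:51}; essentially no new argument is required, only a careful matching of hypotheses and a change of notation.

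First I would apply the lemma on \textsc{enumeratePredecessors} with $E = \{(u,v)\}$ and $R = R_H$. This is legitimate since $R_H$ is a Hasse diagram and, by assumption, $(u,v) \in R_H$, so indeed $E \subseteq R_H$. The lemma then yields that for every Hasse diagram $Q$ we have $Q \in \cX$ if and only if $(R_H \setminus \{(u,v)\})^* \subseteq Q^* \subseteq R_H^*$.

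Next I would rewrite the right-hand side using \cref{lem:51}~\eqref{pointThree}, instantiated with the same Hasse diagram $R_H$ and $R_1 = R_H \setminus \{(u,v)\}$: it states that $(Q \cup \{(u,v)\})^* = R_H^*$ precisely when $R_1^* \subseteq Q^* \subseteq R_H^*$. Since these two chained inclusions are exactly the condition produced in the previous step, membership $Q \in \cX$ becomes equivalent to $(Q \cup \{(u,v)\})^* = R_H^*$. Finally, translating back via the definition of the Hasse diagram~-- namely $R = R_H^*$, as $R_H$ is the Hasse diagram of $P = (\universe, R)$~-- gives $Q \in \cX$ if and only if $R = (Q \cup \{(u,v)\})^*$, as claimed.

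There is no real obstacle here; the only point deserving a moment's attention is that both the lemma and \cref{lem:51}~\eqref{pointThree} quantify $Q$ over Hasse diagrams, which is precisely the class over which the corollary is stated, so the two equivalences can be chained without any loss. Thus the corollary follows in a single line once the two ingredients are aligned.
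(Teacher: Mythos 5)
Your proposal is correct and matches the paper's own (implicit) argument: the paper derives the corollary by exactly this combination of the lemma on $\textsc{enumeratePredecessors}$ (with $E = \{(u,v)\}$) and \cref{lem:51}~\eqref{pointThree}, identifying $R = R_H^*$. Nothing further is needed.
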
\vspace{-0.5cm}

\renewcommand{\thefootnote}{\fnsymbol{footnote}}
\begin{table*}
  \centering\begin{threeparttable}
	\small
	\centering
	\caption{Backward search parameters running times and results of our experiments.}
	\label{tab:prameters_times}
	\begin{tabular}{lll|ll|lrrrr}
		& & & efficiency & full & & & \multicolumn{3}{c}{running time} \\
		$n$ & $C(n)$ & eff.\ $\effTot$  & bandwidth                                                                           & layers & sortable & posets\tnote{$\dagger$}                                                                                                                                                                                                                                                                               & forward & backward                                                                                                              & total   \\\hline
		11  & 26     & $0.59$ & $0.05$                                                                              & $4$    & yes      & $3.7\cdot 10^6$                                                                                                                                                                                                                                                                      & 15\,s   & 111\,ms                                                                                                               & 15\,s   \\
		12  & 29     & $0.89$ & $0.05$                                                                              & $5$    & no       & $6190$\tnote{$\ddagger$} & 15\,ms & 137\,ms & 152\,ms \\
		13  & 33     & $0.72$ & $0.05$                                                                              & $6$    & no       & $516032$                                                                                                                                                                                                                                                                             & 1.7\,s  & 0.6\,s                                                                                                                & 2.3\,s  \\
		13  & 33     & $0.72$ & $0.08$                                                                              & $6$    & no       & $254574$                                                                                                                                                                                                                                                                             & 1.0\,s  & 1.4\,s                                                                                                                & 2.4\,s  \\
		14  & 37     & $0.63$ & $0.12$                                                                              & $9$    & no       & $8.3\cdot 10^6$                                                                                                                                                                                                                                                                      & 32\,s   & 24\,s                                                                                                                 & 56\,s   \\
		15  & 41     & $0.59$ & $0.15$                                                                              & $10$   & no       & $137\cdot 10^6$                                                                                                                                                                                                                                                                      & 11.5\,m   & 6.7\,m                                                                                                                  & 17.2\,m   \\
		16  & 45     & $0.59$ & $0.2$                                                                               & $11$   & no       & $3.99\cdot 10^9$                                                                                                                                                                                                                                                                     & 3\,h    & 8\,h                                                                                                                  & 11\,h   \\
		17  & 49     & $0.63$ & $0.24$                                                                              & $14$   & no       & $13.2\cdot 10^9$                                                                                                                                                                                                                                                                     & 66\,h   & 49\,h                                                                                                                 & 115\,h  \\
		18  & 53     & $0.71$ & $0.19$\tnote{$\dagger\dagger$} & $13$ & no & $10.0\cdot 10^9$ & 56\,h & 33\,h & 89\,h \\
		19  & 57     & $0.84$ & $0.01$                                                                              & $8$    & no       & $318\cdot 10^6$                                                                                                                                                                                                                                                                      & 26\,m   & 135\,s                                                                                                                & 29\,m   \\
		22  & 70     & $0.95$ & $0$                                                                                 &        & no       & $579813$                                                                                                                                                                                                                                                                             & 85\,s   &                                                                                                                       & 85\,s   \\
		28  & 98     & $0.96$ & $0$                                                                                 &        & no       & $30.2\cdot 10^6$                                                                                                                                                                                                                                                                     & 10\,h   &                                                                                                                       & 10\,h\\\hline
	\end{tabular}
\vspace{.5mm} 
  \scriptsize
  \begin{tablenotes}
    \item[$\dagger$] \scriptsize Posets refers to the number of posets stored during both the backward search and the forward search.
    It is less than the total number of posets explored, as some are not stored.
    It is more than the number of posets stored at the end of the execution, as the forward search can discard posets already in storage due to memory constraints.
    \item[$\ddagger$] \scriptsize The number of posets explored for $n=12$ is significantly larger than the numbers given in~\cite{Knuth,peczarski2004new}. In this case the bidirectional search increased the search space: most of the posets can be attributed to the backward search.
    \item[$\dagger\dagger$] \scriptsize  For layers 36 to 40 we used an efficiency bandwidth of $0.23$.
    \end{tablenotes}
  \end{threeparttable}
\end{table*}

\begin{table*}
	\centering~\caption{Running times of our and previous algorithms. The running times due to Peczarski are taken from~\cite{peczarski2012towards}. The year  in the first row indicates the year of publication;
    the year in parentheses indicates the year when the computation was performed.
		For running times for KSI (Kasai, Sawato, Iwata) see~\cite{kasai1994thirty}, and for Wells see~\cite{wells1965applications}.
	}\label{tab:comp_prev_work}
\centering\begin{threeparttable}
	\vspace{1mm}
	{\small
		\begin{tabular}{lll|rrrrrrr}
			& & & Wells & KSI & \multicolumn{4}{c}{\!\!\!Peczarski} & This Work \\ 
			 &  &  & 1965  & 1994   & 2002      & 2004   & 2007     & 2012                   & 2023                  \\ 
             &  &  & \scriptsize (1964)  &    & \scriptsize(2002)      &\scriptsize (2003)   & \scriptsize (2004)     & \scriptsize (2006)      & \scriptsize (2022)                  \\ 
            $n$ & $C(n)$ & $S(n)$        &       &        &           &        &          & \scriptsize 2Threads\! & \scriptsize 48Threads \\\hline
			12  & 29     & 30     & 60\,h & 7\,m   &           &        &          &                        & 152\,ms               \\
			13  & 33     & 34     &       & 230\,h & 10{.}5\,h & 41\,m  & 10\,m    & 46\,s                  & 2.3\,s                \\
			14  & 37     & 38     &       &        &           & 391\,h & 44\,h    & 4{.}5\,h               & 56\,s                 \\
			15  & 41     & 42     &       &        &           &        & 17554\,h\tnote{$\dagger$} &                        & 17.2\,m       
		\end{tabular}
	}
\scriptsize
\begin{tablenotes}
	\item[$\dagger$] \scriptsize 
	The experiments were carried out on up to 24 processors.
\end{tablenotes}
\end{threeparttable}
\end{table*}

  \section{Experiments}\label{sec:experiments}

  We ran our algorithm on a machine with two Intel(R) Xeon(R) E5-2650v4 CPUs (2.20\,GHz, 12 Cores/24 Threads, 30\,MB L3-Cache per CPU), 768\,GB of RAM, 4\,TB NVME SSD storage and 12\,TB hard disk storage.
  We used GNU's \texttt{g++} (10.3.0) optimized with flags \texttt{-O3 -march=native -flto} on Ubuntu 21.04.
  The source code can be found in \url{https://github.com/CodeCrafter47/sortinglowerbounds}.
  It is based on an implementation of Peczarski's algorithm developed by Julian Obst as part of his bachelor's thesis~\cite{obst2019experimentelle}.

 In \cref{fig:params15} we can see the effect of changing the efficiency bandwidth on the running time and the number of explored posets for $n=15$. 
 We can observe that a minimal number of explored posets does not necessarily mean a minimal running time. 
 This is due to the effect that exploring a poset during the backward search is more expensive than during the forward search.

\begin{figure}
	\centering
	\includegraphics[width=.48\textwidth]{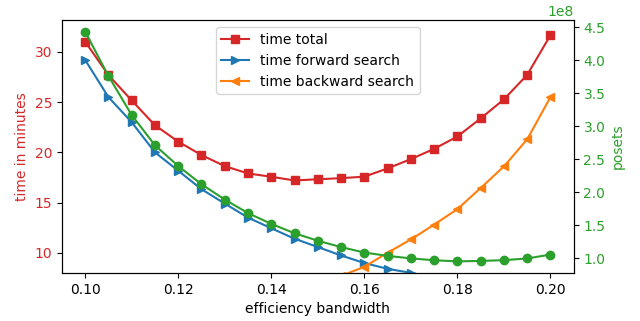}
	\caption{Influence of efficiency bandwidth on running time and size of search space. Here $n=15$ and there are $10$ full layers for the backward search.}
	\label{fig:params15}
\end{figure}

  In \cref{tab:prameters_times} we give details on our experiments including the parameters used for the partial backward search, \ie the number of full layers and the efficiency bandwidth.
  Recomputing the known values of $S(n)$ for $n \in \{11, 12, 13, 14, 15, 19, 22\}$ confirms the results obtained in previous work.
  A comparison of our running times to previous work is shown in \cref{tab:comp_prev_work}.
  Be aware that our program is not optimized for solving small instances fast; instead we aim to solve the larger instances $n = 16,17,18$. 
  Nevertheless, for $n=14 $ we can observe a speedup of roughly 350x compared to Peczarski's latest implementation (using only 24x as many threads).
  This highlights the enormous improvement obtained through the bidirectional search.

  \subsection*{Conclusion.}

  We have computed the minimum number of comparisons required to sort $n$ elements for $n=16, 17, 18$ and provided a new lower bound for $n=28$.
  For $n\geq23$ the value of $S(n)$ is still a wide open questions.
  We believe that some further improvement of the current methods allows for a computation of $S(23)$ and $S(24)$.
  However, for larger values of $n$ other methods (not relying on pure computational power) seem to be necessary.
  On the other hand, it might be possible to actually find new sorting algorithms beating the Ford-Johnson algorithm using an exhaustive computer search with similar methods as in our work.

  \subsection*{Acknowledgements.}
We want to thank Stefan Funke (FMI, University of Stuttgart) for providing us the necessary hardware.

  \bibliography{main}


\FloatBarrier
  \clearpage
  \appendix

  \FloatBarrier

\FloatBarrier
  \section{Further Experimental Results}
  \FloatBarrier

  \begin{table}
    \small
    \centering
    \caption{Explored posets when proving that 16 elements are not sortable using 45 comparisons. Due to the parallelization the numbers for the forward search are not deterministic.}
    \begin{tabular}{rrrr}
      \toprule
      $c$ & backward & forward total & forward sortable \\
      \midrule
      0   & 0        & 1             & 0                \\
      1   & 0        & 1             & 0                \\
      2   & 0        & 1             & 0                \\
      3   & 0        & 1             & 0                \\
      4   & 0        & 3             & 0                \\
      5   & 0        & 6             & 0                \\
      6   & 0        & 13            & 0                \\
      7   & 0        & 28            & 0                \\
      8   & 0        & 79            & 0                \\
      9   & 0        & 216           & 0                \\
      10  & 0        & 536           & 0                \\
      11  & 0        & 1332          & 0                \\
      12  & 7        & 3419          & 0                \\
      13  & 27       & 8779          & 0                \\
      14  & 120      & 22647         & 0                \\
      15  & 547      & 58355         & 0                \\
      16  & 2799     & 144522        & 1                \\
      17  & 13111    & 336757        & 7                \\
      18  & 53976    & 734822        & 17               \\
      19  & 200777   & 1478536       & 50               \\
      20  & 650967   & 2714169       & 159              \\
      21  & 1806016  & 4477006       & 459              \\
      22  & 4252380  & 6597314       & 1342             \\
      23  & 8792382  & 8625146       & 4694             \\
      24  & 16147302 & 9982658       & 22712            \\
      25  & 26159419 & 10370026      & 154359           \\
      26  & 37357419 & 10830095      & 1130279          \\
      27  & 46683444 & 17750837      & 6715593          \\
      28  & 49479591 & 54072329      & 29289077         \\
      29  & 43134486 & 183457116     & 96257441         \\
      30  & 31183260 & 523435272     & 227013407        \\
      31  & 19016824 & 1224756072    & 356223429        \\
      32  & 10080033 & 823309419     & 241783633        \\
      33  & 4674504  & 790331807     & 142762933        \\
      34  & 1956027  & not counted   & not counted      \\
      35  & 13128260 & 0             & 0                \\
      36  & 3148032  & 0             & 0                \\
      37  & 748110   & 0             & 0                \\
      38  & 175269   & 0             & 0                \\
      39  & 40179    & 0             & 0                \\
      40  & 8888     & 0             & 0                \\
      41  & 1867     & 0             & 0                \\
      42  & 358      & 0             & 0                \\
      43  & 63       & 0             & 0                \\
      44  & 8        & 0             & 0                \\
      45  & 1        & 0             & 0                \\
      \bottomrule
    \end{tabular}
    \label{tab:numbers_16}
  \end{table}

  \begin{table}
    \small
    \caption{Explored posets when proving that 17 elements are not sortable using 49 comparisons. Due to the parallelization the numbers for the forward search are not deterministic.}
    \centering
    \begin{tabular}{rrrr}
      \toprule
      \textbf{c} & \textbf{backward} & \textbf{forward} & \textbf{forward sortable} \\
      \midrule
      0          & 0                 & 1                & 0                         \\
      1          & 0                 & 1                & 0                         \\
      2          & 0                 & 1                & 0                         \\
      3          & 0                 & 1                & 0                         \\
      4          & 0                 & 2                & 0                         \\
      5          & 0                 & 3                & 0                         \\
      6          & 0                 & 6                & 0                         \\
      7          & 0                 & 13               & 0                         \\
      8          & 0                 & 38               & 0                         \\
      9          & 0                 & 93               & 0                         \\
      10         & 0                 & 191              & 0                         \\
      11         & 0                 & 379              & 0                         \\
      12         & 1                 & 836              & 0                         \\
      13         & 7                 & 1880             & 0                         \\
      14         & 51                & 4171             & 0                         \\
      15         & 345               & 8957             & 0                         \\
      16         & 1832              & 18845            & 0                         \\
      17         & 8994              & 38808            & 0                         \\
      18         & 41488             & 76896            & 0                         \\
      19         & 171925            & 142572           & 1                         \\
      20         & 658099            & 238328           & 7                         \\
      21         & 2263013           & 354126           & 20                        \\
      22         & 6777866           & 461872           & 67                        \\
      23         & 17955445          & 527787           & 176                       \\
      24         & 41765909          & 529913           & 476                       \\
      25         & 84987026          & 471159           & 1615                      \\
      26         & 152270879         & 383322           & 9504                      \\
      27         & 244370163         & 389412           & 80550                     \\
      28         & 349387637         & 1309265          & 672811                    \\
      29         & 434085076         & 8635871          & 4926106                   \\
      30         & 458231170         & 52049816         & 29625773                  \\
      31         & 403251176         & 248754756        & 136635748                 \\
      32         & 291323005         & 908932076        & 459972165                 \\
      33         & 173049756         & 2111853697       & 946601985                 \\
      34         & 84202562          & 5020719986       & 1564414587                \\
      35         & 34160719          & not counted      & not counted               \\
      36         & 1612229393        & 0                & 0                         \\
      37         & 377447832         & 0                & 0                         \\
      38         & 88042972          & 0                & 0                         \\
      39         & 20471433          & 0                & 0                         \\
      40         & 4737483           & 0                & 0                         \\
      41         & 1087856           & 0                & 0                         \\
      42         & 246083            & 0                & 0                         \\
      43         & 54428             & 0                & 0                         \\
      44         & 11569             & 0                & 0                         \\
      45         & 2333              & 0                & 0                         \\
      46         & 424               & 0                & 0                         \\
      47         & 71                & 0                & 0                         \\
      48         & 8                 & 0                & 0                         \\
      49         & 1                 & 0                & 0                         \\
      \bottomrule
    \end{tabular}
    \label{tab:numbers_17}
  \end{table}

  \begin{table}
    \small
    \caption{Explored posets when proving that 18 elements are not sortable using 53 comparisons. Due to the parallelization the numbers for the forward search are not deterministic.}
    \centering
    \begin{tabular}{rrrr}
      \toprule
      \textbf{c} & \textbf{backward} & \textbf{forward} & \textbf{forward sortable} \\
      \midrule
      0          & 0                 & 1                & 0                         \\
      1          & 0                 & 1                & 0                         \\
      2          & 0                 & 1                & 0                         \\
      3          & 0                 & 1                & 0                         \\
      4          & 0                 & 2                & 0                         \\
      5          & 0                 & 3                & 0                         \\
      6          & 0                 & 5                & 0                         \\
      7          & 0                 & 11               & 0                         \\
      8          & 0                 & 26               & 0                         \\
      9          & 0                 & 45               & 0                         \\
      10         & 0                 & 75               & 0                         \\
      11         & 0                 & 144              & 0                         \\
      12         & 0                 & 300              & 0                         \\
      13         & 0                 & 620              & 0                         \\
      14         & 0                 & 1241             & 0                         \\
      15         & 4                 & 2451             & 0                         \\
      16         & 26                & 5176             & 0                         \\
      17         & 145               & 11135            & 0                         \\
      18         & 675               & 22701            & 0                         \\
      19         & 2894              & 41276            & 0                         \\
      20         & 11282             & 66216            & 0                         \\
      21         & 44841             & 92701            & 0                         \\
      22         & 172993            & 113392           & 0                         \\
      23         & 634769            & 121822           & 1                         \\
      24         & 2214659           & 114568           & 3                         \\
      25         & 6917730           & 94481            & 4                         \\
      26         & 18341765          & 68769            & 7                         \\
      27         & 41133798          & 44673            & 12                        \\
      28         & 80817650          & 26114            & 24                        \\
      29         & 142938247         & 14178            & 190                       \\
      30         & 226165662         & 11464            & 3725                      \\
      31         & 316477949         & 77392            & 60097                     \\
      32         & 388014312         & 850815           & 598475                    \\
      33         & 413123474         & 6817719          & 4285187                   \\
      34         & 380426702         & 40657441         & 23380168                  \\
      35         & 304032816         & 183261692        & 97325918                  \\
      36         & 525993822         & 573715259        & 296751962                 \\
      37         & 304093842         & 1321568879       & 565651499                 \\
      38         & 156193137         & 2356314839       & 889815226                 \\
      39         & 69196030          & 2758245207       & 923822493                 \\
      40         & 26609971          & not counted      & not counted               \\
      41         & 607131408         & 0                & 0                         \\
      42         & 136945779         & 0                & 0                         \\
      43         & 30862867          & 0                & 0                         \\
      44         & 6930638           & 0                & 0                         \\
      45         & 1543827           & 0                & 0                         \\
      46         & 338555            & 0                & 0                         \\
      47         & 72417             & 0                & 0                         \\
      48         & 14872             & 0                & 0                         \\
      49         & 2881              & 0                & 0                         \\
      50         & 506               & 0                & 0                         \\
      51         & 80                & 0                & 0                         \\
      52         & 9                 & 0                & 0                         \\
      53         & 1                 & 0                & 0                         \\
      \bottomrule
    \end{tabular}
    \label{tab:numbers_18}
  \end{table}

  \begin{table}[]
    \small
    \centering
    \caption{Summary of known values for $S(n)$ and new results.
    Here $C(n)$ is the information theoretic lower bound, $F(n)$ is the number of comparisons required by the Ford-Johnson algorithm and $S(n)$ is the minimum number of comparisons required to sort $n$ elements.
    Lines marked with a star contain new results obtained in this paper.}
    \label{tab:results}
    \begin{tabular}{lllll}
      \toprule
      $n$ & $C(n)$ & $F(n)$ & $S(n)$                                         \\
      \midrule
      1   & 0      & 0      & 0                                              \\
      2   & 1      & 1      & 1                                              \\
      3   & 3      & 3      & 3                                              \\
      4   & 5      & 5      & 5                                              \\
      5   & 7      & 7      & 7                                              \\
      6   & 10     & 10     & 10                                             \\
      7   & 13     & 13     & 13                                             \\
      8   & 16     & 16     & 16                                             \\
      9   & 19     & 19     & 19                                             \\
      10  & 22     & 22     & 22                                             \\
      11  & 26     & 26     & 26                                             \\
      12  & 29     & 30     & 30~\cite{wells2014elements}                    \\
      13  & 33     & 34     & 34~\cite{kasai1994thirty,peczarski2002sorting} \\
      14  & 37     & 38     & 38~\cite{peczarski2004new}                     \\
      15  & 41     & 42     & 42~\cite{chinese2007,peczarski2007ford}        \\
      16  & 45     & 46     & 46~$\bigstar$                                  \\
      17  & 49     & 50     & 50~$\bigstar$                                  \\
      18  & 53     & 54     & 54~$\bigstar$                                  \\
      19  & 57     & 58     & 58~\cite{chinese2007}                          \\
      20  & 62     & 62     & 62                                             \\
      21  & 66     & 66     & 66                                             \\
      22  & 70     & 71     & 71~\cite{peczarski2004new}                     \\
      23  & 75     & 76     & 75 or 76                                       \\
      24  & 80     & 81     & 80 or 81                                       \\
      25  & 84     & 86     & 84, 85 or 86                                   \\
      26  & 89     & 91     & 89, 90 or 91                                   \\
      27  & 94     & 96     & 94, 95 or 96                                   \\
      28  & 98     & 101    & 99, 100 or 101~$\bigstar$                      \\
      29  & 103    & 106    & 103, 104, 105 or 106                           \\
      30  & 108    & 111    & 108, 109, 110 or 111                           \\
      31  & 113    & 116    & 113, 114, 115 or 116                           \\
      32  & 118    & 121    & 118, 119, 120 or 121                           \\
      33  & 123    & 126    & 123, 124, 125 or 126                           \\
      34  & 128    & 131    & 128, 129, 130 or 131                           \\
      35  & 133    & 136    & 133, 134, 135 or 136                           \\
      36  & 139    & 141    & 139, 140 or 141                                \\
      37  & 144    & 146    & 144, 145 or 146                                \\
      38  & 149    & 151    & 149, 150 or 151                                \\
      39  & 154    & 156    & 154, 155 or 156                                \\
      40  & 160    & 161    & 160 or 161                                     \\
      41  & 165    & 166    & 165 or 166                                     \\
      42  & 170    & 171    & 170 or 171                                     \\
      43  & 176    & 177    & 176 or 177                                     \\
      44  & 181    & 183    & 181, 182 or 183                                \\
      45  & 187    & 189    & 187, 188 or 189                                \\
      46  & 192    & 195    & 192, 193, 194 or 195                           \\
      47  & 198    & 201    & 198, 199 or 200~\cite{SCHULTEMONTING198119}    \\
      \bottomrule
    \end{tabular}
  \end{table}
\end{document}